\newif\ifarxiv
\newif\ifjournal
\newclass{\classP}{P}
\newclass{\np}{NP}
\newclass{\sharpP}{\#P}
\newclass{\BQP}{BQP}
\pgfplotsset{compat=1.11}
\definecolor{newblue}{RGB}{40,210,251}
\definecolor{lightgray}{RGB}{170,170,170}
\definecolor{darkyellow}{RGB}{255,210,70}
\definecolor{darkyellow2}{RGB}{251,184,38}
\definecolor{metalblue}{RGB}{78,156,219}
\definecolor{metalblue2}{RGB}{34,52,103}
\definecolor{pink}{RGB}{237,16,118}
\definecolor{pink2}{RGB}{131,28,71}
\definecolor{violet}{HTML}{53257F} 
\definecolor{violet2}{RGB}{61,18,100}
\definecolor{applegreen}{rgb}{0.55, 0.71, 0.0}
\definecolor{applegreen2}{rgb}{0.4, 0.8, 0.0}
\definecolor{DarkGray}{gray}{0.25} 
\definecolor{MidGray}{gray}{0.38} 
\definecolor{NeutralGray}{gray}{0.5}
\definecolor{LightGray}{gray}{0.7}
\definecolor{lightGray}{gray}{0.85}
\definecolor{DarkRed}{rgb}{0.7,0,0}
\definecolor{DarkBlue}{rgb}{0,0,0.5}
\definecolor{SteelBlue}{rgb}{0,0.4,0.6}
\definecolor{Orange}{rgb}{0.7,0.5,0}
\definecolor{Violette}{rgb}{0.5,0,0.5}
\definecolor{Sand}{rgb}{0.84,0.8,0.55}
\definecolor{niceblue}{rgb}{0.33,0.5,0.8}
\definecolor{OliveGreen}{RGB}{0,102,102}
\definecolor{NiceGreen}{RGB}{0,153,72}
\colorlet{tensorcol}{niceblue!70!gray}
\definecolor{changecol}{rgb}{0.7,0,0}
\colorlet{tensorcol1}{darkyellow}
\colorlet{tensorcol1border}{darkyellow2}
\colorlet{tensorcol2}{metalblue}
\colorlet{tensorcol2border}{metalblue2}
\colorlet{tensorcol}{niceblue!70!gray}
\definecolor{changecol}{rgb}{0.7,0,0}
\definecolor{applegreen}{rgb}{0.55, 0.71, 0.0}
\tikzset{
	ol/.style = {remember picture, overlay},
	eq-pic/.style = {inner sep = .5pt,draw, very thick, gray, rounded corners = 2pt},
}
\tikzset{
  connection edge/.style ={very thick, gray},
  generaltensor/.style = {    
    rectangle,
    rounded corners = 0.3mm,
    text = white,
    fill = tensorcol,
    draw
    },
  tensorbox/.style={
    generaltensor,
    inner sep = 2pt, 
    minimum height = 1.1\baselineskip,
    minimum width = 1.1\baselineskip
    },
  tensorleg/.style={
    very thick,black!80
    },
  Vertex/.style={
    circle,inner sep=0pt,minimum size=3mm,
    },
  site/.style = {
    minimum width = 1.3em, 
    minimum height = 0.5\baselineskip,
    rounded corners=0.3mm,
    thick,draw=black!40,
    top color=white,bottom color=black!20
    },
  channel/.style = {
    generaltensor,
    minimum height=0.5\baselineskip,
    minimum width=1.5cm
  }, 
  ball/.style ={
      circle,
      radius=0.01cm,
      shading=ball, 
      ball color=black}
  }
\colorlet{BboxCol}{gray!15}
\definecolor{LightYellow}{RGB}{255,255,120}
\colorlet{SboxCol}{LightYellow!80!LightGray}
\colorlet{Green}{OliveGreen}
\colorlet{TextGreen}{OliveGreen!80}
\colorlet{TextBlue}{blue!60}
\colorlet{VertexCol1}{blue!30!LightGray}
\colorlet{VertexCol}{black}
\def \ny{8}
\tikzset{overlay re oben/.style = {anchor = north east, yshift = -1cm},
	 vertex/.style = {circle,inner sep = 1.2ex},
	 beschriftung/.style = {transform shape = false},
	 box/.style = {rounded corners =.5ex},
	 edges/.style = {thick, line cap = round, gray},
	 special edges/.style = {edges, ultra thick},
	 red edges/.style = {special edges, red!80},
	 green edges/.style = {special edges, Green},
	 lattice/.style = {scale = .54,transform shape, anchor = center}
	 }  
\tikzset{
	ol/.style = {remember picture, overlay},
	eq-pic/.style = {inner sep = .5pt,draw, very thick, gray, rounded corners = 2pt},
}
\tikzset{
  edges/.style = {thick, line cap = round, gray},
  connection edge/.style ={very thick, gray},
  vertex/.style={
    circle,inner sep=0pt,minimum size=2mm,
    thick,draw=gray,
    fill = lightgray
    },
  tensor1/.style = {    
    generaltensor,
    inner sep = 2pt, 
    minimum height = 1.2\baselineskip,
    minimum width = 1.2\baselineskip,
    fill = tensorcol1,
    draw =tensorcol1border,
    thick
    },
  semicircle/.style = {    
    semicircle,
    rounded corners = 0.3mm,
    text = white,
    fill = tensorcol,
    draw
    },
1qubit/.style={
    generaltensor,
    inner sep = 2pt, 
    minimum height = 1.2\baselineskip,
    minimum width = 1.2\baselineskip,
    fill = tensorcol2,
    draw =tensorcol2border,
    thick
    },
2qubits/.style={
    generaltensor,
    inner sep = 2pt, 
    minimum height = 2.6\baselineskip,
    minimum width = 1.2\baselineskip,
    shading = axis, 
    draw =tensorcol2border,
    thick
    }
}
\newcommand{\measurement}{
\begin{tikzpicture}[scale = .3]
\draw [black!80,very thick,line cap=round,domain=40:140] plot ({.8 * cos(\x)}, {.8 * sin(\x)}); 
\draw [edges,red] (0,0) -- (.5,1);
\end{tikzpicture}
}
\newcommand{\ampelgruen}{
\begin{tikzpicture}[scale = .3,anchor = center]

\node (green) at (0,-1) {};
\node (red) at (0,1) {};

\node [fit = (green) (red),tensor1, draw = DarkGray, fill = MidGray, inner sep = 5pt] {};

\node at (green) [circle,draw = DarkGray,fill = applegreen, thick]{};

\node at (red) [circle,draw = DarkGray,fill = DarkGray, thick]{};

\end{tikzpicture}
}
\newcommand{\ampelrot}{
\begin{tikzpicture}[scale = .3,anchor = center]

\node (green) at (0,-1) {};
\node (red) at (0,1) {};

\node [fit = (green) (red),tensor1, draw = DarkGray, fill = MidGray, inner sep = 5pt] {};

\node at (green) [circle,draw = DarkGray,fill = DarkGray, thick]{};

\node at (red) [circle,draw = DarkGray,fill = DarkRed, thick]{};

\end{tikzpicture}
}
\tikzset{
measure/.style={
    generaltensor,
    inner sep = 2pt, 
    minimum height = 1.2\baselineskip,
    minimum width = 1.2\baselineskip,
    fill = lightgray,
    draw =gray ,
    thick
    },
  tensorbox/.style={
    generaltensor,
    inner sep = 2pt, 
    minimum height = 1.2\baselineskip,
    minimum width = 1.2\baselineskip,
    fill = tensorcol2,
    draw = tensorcol2border
    },
  tensorleg/.style={
    very thick,black!80
    },
  site/.style = {
    minimum width = 1.3em, 
    minimum height = 0.5\baselineskip,
    rounded corners=0.3mm,
    thick,draw=black!40,
    top color=white,bottom color=black!20
    },
  channel/.style = {
    generaltensor,
    minimum height=0.5\baselineskip,
    minimum width=1.5cm
  }, 
  ball/.style ={
      circle,
      radius=0.01cm,
      shading=ball, 
      ball color=black}
  }
\newtheorem{theorem}{Theorem}
\newtheorem{definition}[theorem]{Definition}
\newtheorem{lemma}[theorem]{Lemma}
\DeclareMathOperator{\Perm}{Perm}
\DeclareMathOperator{\Eb}{\mb E}
\DeclareMathOperator{\Erfc}{\mathrm{Erfc}}
\DeclareMathOperator{\Sym}{Sym}
\newcommand{\ket}[1]{\vert{#1}\rangle}
\newcommand{\bra}[1]{\langle{#1}\vert}
\newcommand{\norm}[1]{\Vert #1 \Vert}
\newcommand{\e}{\mathrm{e}}
\newcommand{\mc}{\mathcal}
\newcommand{\mb}{\mathbb}
\newcommand{\mr}{\mathrm}
\newcommand{\Hmin}{H_{\infty}}
\newcommand{\pbos}{P_{\mathrm{bs},U}}
\newcommand{\piqp}{P_{U_W}}
\newcommand{\dd}{\ \mathrm{d}}
\definecolor{martin}{rgb}{0,.4,1}
\definecolor{jens}{rgb}{0,.8,.5}
\definecolor{christian}{rgb}{.7,.1,0}
\definecolor{dominik}{RGB}{237,16,118}
\newcommand{\myabstract}[1]{
\begin{abstract}
#1
\end{abstract}
\maketitle
}
\newcommand{\hhu}{%
	Institute for Theoretical Physics,
	Heinrich Heine University D{\"u}sseldorf,
	40225 D{\"u}sseldorf,
	Germany
}
\newcommand{\fu}{%
	Dahlem Center for Complex Quantum Systems,
	Freie Universit{\"a}t Berlin,
	14195 Berlin,
	Germany
}
\newcommand{\fumath}{%
  Department of Mathematics and Computer Science, 
  Freie Universit{\"a}t Berlin,
  14195 Berlin, 
  Germany
}
\newcommand{\icfo}{%
	ICFO-Institut de Ciencies Fotoniques,
	The Barcelona Institute of Science and Technology,
	08860 Castelldefels (Barcelona),
	Spain
}
\newcommand{\cologne}{%
 Institute for Theoretical Physics, University of Cologne, 50937 K\"oln, Germany
}
\newcommand{\detailsandproof}{the Supplementary Material~\cite{suppmaterial}}
\newcommand{\refdetailsandproof}{\detailsandproof}
\begin{document}
\ifjournal
\begin{bibunit}
\fi

\title{Sample complexity of device-independently certified ``quantum supremacy''}

\author{Dominik Hangleiter}
\affiliation{\fu}
\author{Martin Kliesch}
\affiliation{\hhu}
\author{Jens Eisert}
\affiliation{\fu}
\affiliation{\fumath}
\author{Christian Gogolin}
\affiliation{\icfo}
\affiliation{\cologne}
\affiliation{Xanadu, 372 Richmond St W, Toronto, M5V 1X6, Canada}

\myabstract{
Results on the hardness of approximate sampling are seen as important stepping stones towards a convincing demonstration of the superior computational power of quantum devices. 
The most prominent suggestions for such experiments include boson sampling, IQP circuit sampling, and universal random circuit sampling. 
A key challenge for any such demonstration is to certify the correct implementation. 
For all these examples, and in fact for all sufficiently flat distributions, 
we show that any non-interactive certification from classical samples and a description of the target distribution requires exponentially many uses of the device. 
Our proofs rely on the same property that is a central ingredient for the approximate hardness results: 
namely, that the sampling distributions, as random variables depending on the random unitaries defining the problem instances, have small second moments.
  }

\ifarxiv\section{Introduction}\fi 
Quantum sampling devices have been hailed as promising candidates for the demonstration of ``quantum (computational) supremacy''\footnote{
Acknowledging the recent debate, we use the term ``quantum (computational) supremacy'' strictly in its established technical meaning~\cite{preskill2013quantum}.
} \cite{preskill2013quantum}.
The goal of any such experiment is to unambiguously demonstrate that quantum devices can solve some tasks both faster and with a more favourable scaling of the computational effort than any classical machine.
At the same time, in the near term it is bound to use those small and computationally restricted quantum devices that are available before the arrival of universal, scalable, and fault-tolerant quantum computers.
This challenge has sparked a flurry of experimental activity \cite{Spring2013,Tillmann2013,Broome2013,Crespi2013,Carolan2013,Spagnolo2013} and prompted the development of better classical sampling schemes for exact \cite{clifford_classical_2017,neville_classical_2017} and imperfect realizations \cite{gogolin_boson-sampling_2013,bremner_achieving_2017,oszmaniec_classical_2018,renema_quantum--classical_2018}.
Due to the reality of experimental imperfections, the key theoretical challenge --- achieved in Refs.~\cite{aaronson_computational_2010,
boixo_characterizing_2016,
bremner_average-case_2016,
bouland_quantum_2018,
gao_quantum_2017,
Supremacy,
hangleiter_anticoncentration_2018,
bouland_complexity_2018,
yoganathan_quantum_2018} using Stockmeyer's approximate counting algorithm \cite{Stockmeyer85ApproxiationSharpP} --- is to prove that even \emph{approximately} sampling from the output distribution of the quantum device is classically hard.

In any such demonstration, the issue of certification is of outstanding importance \cite{shepherd_temporally_2009,gogolin_boson-sampling_2013,aaronson_bosonsampling_2013,aolita_reliable_2015,Hangleiter,bouland_quantum_2018}:
To demonstrate something non-trivial, one not only needs to build a device that is designed to sample approximately from a classically hard distribution but at the same time, one needs to ensure from a feasible number of uses of the device (or its parts) that it actually achieves the targeted task.
How can one convince a skeptical certifier that a quantum device, which supposedly does something no classical machine can do, actually samples from a distribution that is close enough to the ideal target distribution?

The arguably most elegant and most convincing certification would be one based on purely classical data, ideally only the samples produced by the device and a description of the target distribution.
Such certification would be free of additional complexity-theoretic assumptions and device-independent, in that it would be agnostic to all implementation details of the device and would directly certify that the classically defined sampling problem was solved.


\begin{figure}[b]
\begin{tikzpicture}[scale = .65,anchor = center]

\def \l{4};
\def \n{5}

\foreach \x in {1,...,\n}{
    \node (v\x) at (-.2, \x) {$\ket 0$};
    \draw[edges] (v\x) -- (.5, \x);
    \draw[edges] (v\x) -- (\l-.4, \x);

    \draw[edges,DarkGray] (\l,\x+.05) -- (\l+.9, \x + .05);
    \draw[edges,DarkGray] (\l,\x-.05) -- (\l+.9, \x - .05);
    \node[measure] (m\x) at (\l, \x) {\measurement};
    \node (n\x) at (\l+1.35, \x) {\small $S_{i,\x}$};

}

\node (sampler) [fit = (v1) (m\n),inner sep = 1mm, tensor1,
  fill= DarkGray,
    fill opacity = 0.15, draw = DarkGray] {};

\node (lu) at (1,5) {};
\node (bu) at (2.75,1) {};

\node [fit = (lu) (bu),2qubits,
        left color=metalblue!30, right color=metalblue,
        shading angle=60,
        inner sep = .3\baselineskip] { \Large $U$ };

\path (\l,3) ++ (2.5, 0) node  (samples) [anchor = west,draw = DarkGray,thick, rounded corners = 2pt ] {\small $S$} ;

\path (samples.east) ++ (1.5, 0) node (test) [tensor1, shading = axis,
    left color=violet!30, right color=violet,
    shading angle=60,, draw = violet2,minimum size = 3\baselineskip] {$\mathcal{T}_5$};

\path (test.east) ++ (2, -2) node (ampelgruen) {\ampelgruen};
\path (test.east) ++ (2, 2) node (ampelrot) {\ampelrot};

\draw[->, edges] (n3.east) to (samples.west);
\draw[->, edges] (samples.east) to (test.west);

\draw [->, edges] (test.east) to[out=0,in=180] (ampelgruen.west) node [below left,sloped,text = black] {$Q = P_U$};

\draw [-> , edges] (test.east) to[out=0,in=180] (ampelrot.west)  node [above left,sloped,text = black] {$\norm{Q - P_U} > \epsilon$};;

\path (samples.south) ++(0,-1.5) node [draw = DarkGray,thick, rounded corners = 2pt ] (epsilon) {$\epsilon$};

\draw [-> , edges,bend left] (epsilon.north) to (test.south west);

\path (samples.north) ++(0,1.5) node [draw = DarkGray,thick, rounded corners = 2pt ] (dist) {$P_U$};

\draw [-> , edges,bend right] (dist.south) to (test.north west);

\end{tikzpicture}
\caption{\label{fig:testing scheme}%
We consider the problem of certifying probability distributions of the form $P_U(S) = |\bra S U \ket {S_0}|^2$ with an input state $\ket{S_0} = \ket 0 ^{\otimes n}$ and a unitary $U \sim \mu_n$ drawn from some  measure $\mu_n$.
Given $\epsilon > 0$ and access to an arbitrary-precision description of the target distribution $P_U$, the test $\mc T_n$ treats the sampler as a black box and receives a sequence $\mc S = (S_i)_{i = 1}^s \sim Q$ of $s$ samples from an unknown distribution $Q$.
Given $S$ the test is asked to output ``Accept'' if $Q = P_U$ and ``Reject'' if $\norm{Q- P_U}_1 > \epsilon$ with high probability.
}
\end{figure}
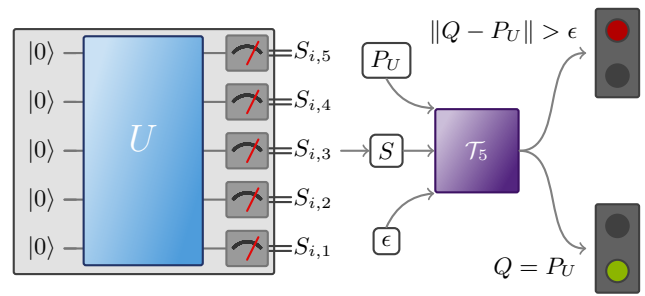

In this work, we rigorously prove for a broad range of sampling problems, specifically for boson sampling  \cite{aaronson_computational_2010}, universal random circuit sampling \cite{boixo_characterizing_2016,bouland_quantum_2018}, IQP circuit sampling \cite{shepherd_temporally_2009,bremner_average-case_2016}, and sampling from post-selected-universal 2-designs \cite{hangleiter_anticoncentration_2018,nakata_generating_2014,Nakata,
yoganathan_quantum_2018,bouland_complexity_2018} that they cannot be efficiently certified from classical samples and a description of the target probability distribution.
Ironically, it turns out that the same property of a distribution that allows to prove the known approximate-hardness results also forbids their non-interactive sample-efficient device independent certification, to the effect that with the known proof methods both properties cannot be achieved simultaneously in such schemes.
We directly bound the sample complexity of certification, which means that we automatically also lower bound the computational complexity and that our results cannot be circumvented by increasing the classical computational power of the certifier\ifarxiv\footnote{This makes our results conceptually different from the observation of Brand\~{a}o.
This observation is based on a result by Trevisan, Tulsiani, and Vadhan \cite{Trevisan2010}, was reported by Aaronson and Arkhipov~\cite{aaronson_bosonsampling_2013} and shows the following:
For most unitaries $U$ drawn from the Haar measure, and any fixed circuit size $T$, there exists a classical ``cheating'' circuit of size polynomially larger than $T$, whose output distribution can not be distinguished from the corresponding boson sampling  distribution by any ``distinguisher'' circuit of size $T$.}\fi.

The specific question of certification we focus on here is (see Figure~\ref{fig:testing scheme}):
Given unlimited computational power and a full
description of the target distribution, how many samples from an unknown distribution are required to guarantee that this distribution is either identical to the target distribution or at least some preset distance away from it?
This problem of distinguishing one (target) distribution from all sufficiently different alternatives is known as \emph{identity testing} \cite{goldreich_introduction_2017} in the property testing literature.
Identity testing is an easier task than its robust version in which the certifier is moreover required to accept a constant-size region around the target distribution \cite{valiant_clt_2010,aolita_reliable_2015}.
At the same time, it is much harder than mere \emph{state-discrimination}, where the task is to differentiate between two fixed distributions.

Lower bounds on the sample complexity of restricted state-discrimination scenarios~\cite{gogolin_boson-sampling_2013} prompted the development of schemes~\cite{aaronson_bosonsampling_2013} that allow to corroborate and build trust in experiments~\cite{Spagnolo2013,Carolan2013,walschaers_statistical_2016}.
This helped spark interest in the problem of device-independent certification --- on which there had not been much progress since~\cite{shepherd_temporally_2009}.
In contrast to previous work~\cite{gogolin_boson-sampling_2013}, here, the certifier is given a full description of the target distribution\footnote{
In particular, the certifier is given the value of all target probabilities to arbitrary precision.
} 
and unlimited computational power.


Our proof\ifjournal, detailed in \refdetailsandproof, \fi makes use of a key property for the proof of hardness of approximate sampling, namely an upper bound on the second moments of the output probabilities with respect to the choice of a random unitary specifying the instance of the sampling problem.
The bound on the second moments implies that the probabilities are concentrated around the uniform distribution and hence an anti-concentration property\ifjournal\footnote{See \refdetailsandproof, Sec.~\ref{sec:second moments}}\fi.
This anti-concentration allows lifting results on the hardness of approximate sampling up to relative errors to ones for additive errors --- provided relative-error approximation of the output probabilities is hard \emph{on average}.
It is thus a key property to prove hardness in the physically relevant case of approximate sampling that prevents a purely classical non-interactive certification of the output distribution, see Figure~\ref{fig:proof overview}.

\begin{figure}
\begin{tikzpicture}[node distance=3cm,align=center,
    every node/.style={
    draw =metalblue2,
    thick,
    rounded corners,text width=2cm,
              shading = axis,
              left color=metalblue!20, right color=metalblue!70,
              shading angle=60}]
    \def\u{.75};
    \def\uu{1.5};
    \node (smb) {Second moment bound};

    \node (ac) [right of=smb,yshift = \u cm] {Anti-concentration};
    \node (ach) at (ac) [yshift = \uu cm] {Average-case hardness};

    \node (has) [right of=ach]{Hardness of approximate sampling};

    \node (sa) at (ach) [yshift=\uu cm]{Stockmeyer's algorithm};

    \node (hes) [right of=sa]{Hardness of exact sampling};

    \node (hme) [right of=smb,yshift=-\u cm,shading = axis,left color=violet!20, right color=violet!60,shading angle=60] {High\\ min-entropy};
    \node (hoc) [right of=hme,shading = axis,left color=violet!20, right color=violet!60,shading angle=60] {Hardness of classical certification};

    \draw[->,edges] (smb) to[out=0,in=180] (ac);
    \draw[->,edges] (smb) to[out=0,in=180] (hme);
    \draw[->,edges] (hme) to[out=0,in=180] (hoc);
    \draw[->,edges] (ach) to[out=0,in=180] (has);
    \draw[->,edges] (ac) to[out=0,in=180] (has);
    \draw[->,edges] (sa) to[out=0,in=180] (hes);
    \draw[->,edges] (sa) to[out=0,in=180] (has);

  \end{tikzpicture}
  \caption{A high level overview of the approximate sampling ``quantum supremacy'' proofs of Refs.~\cite{aaronson_computational_2010,bremner_average-case_2016,boixo_characterizing_2016,miller_quantum_2017,gao_quantum_2017,Supremacy,hangleiter_anticoncentration_2018} using Stockmeyer's algorithm \cite{Stockmeyer85ApproxiationSharpP}.
  Invoking a worst-case hardness result for the calculation of the output probabilities of some circuit family, Stockmeyer's algorithm can be used to prove the hardness of exact sampling.
  The key properties of the output probabilities that allows to prove hardness of \emph{approximate} sampling are that computing these probabilities is even hard \emph{on average} and that the distribution anti-concentrates.
  We show that the same property that is essential to arrive at a hardness result for approximate sampling via anti-concentration also makes it hard to certify from classical samples and a complete description of the target distribution, even with unbounded \emph{computational power}.
  \label{fig:proof overview}
  }
\end{figure}
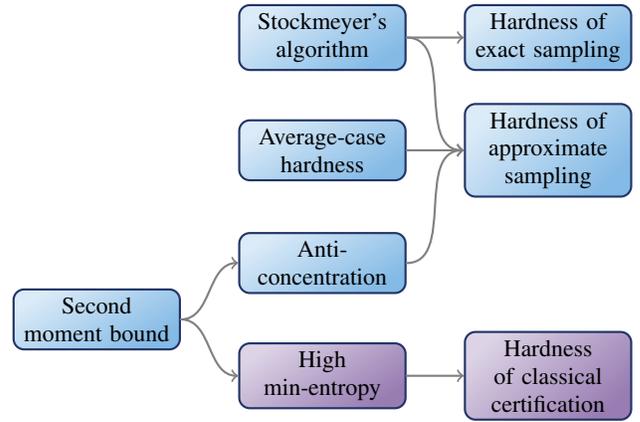

A central ingredient to our proof is a recent result by \citet{valiant_automatic_2017} specifying the optimal sample complexity of certifying a known target distribution $P$.
It can be stated as follows.
Fix a preset distance $\epsilon> 0 $ up to which we want to certify.
Now, suppose we receive samples from a device that samples from an unknown probability distribution $Q$.
Then --- for some constants $c_1,c_2$ --- it requires at least
\begin{equation}
\label{eq:main_valiant_sample_complexity}
c_1 \cdot  \max \left\{\frac1 \epsilon, \frac1{\epsilon^2} \norm{P_{-2\epsilon}^{-\max}}_{2/3} \right\} \,
\end{equation}
and at most
\begin{equation}
\label{eq:main_valiant_sample_complexity_upper}
c_2 \cdot  \max \left\{\frac1 \epsilon, \frac1{\epsilon^2} \norm{P_{-\epsilon/16}^{-\max}}_{2/3} \right\} \,
\end{equation}
many samples to distinguish the case $P = Q$ from the case $\norm{P-Q}_1\geq \epsilon$ with high probability.
Here $\norm{\cdot}_1$ denotes the $\ell_1$-norm reflecting the total-variation distance.
The central quantity determining the sample complexity of certification is thus the quasi-norm $\norm{P_{-\epsilon}^{-\max}}_{2/3}$ which is defined as follows.
First, find the truncated distribution $P_{-\epsilon}^{-\max}$ by removing the tail of the target distribution $P$ with weight at most $\epsilon$ as well as its largest entry, see Figure~\ref{fig:p-epsilon-max}.
Then, take the $\ell_{2/3}$-norm as given by $\norm{x}_{2/3} = ( \sum_i |x_i|^{2/3} )^{3/2}$ for a vector $x$ with entries $x_i$.

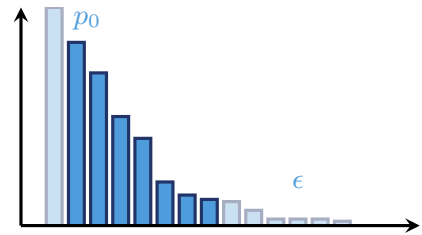
\begin{figure}[b]

\begin{tikzpicture}[scale = 1.2,anchor = center]
    \begin{axis}[
      ybar,
      axis on top,
      bar width=5pt,
      ymin=0,
      ymax =50,
      xmin = -1,
      xmax = 14,
      ytick=\empty,
      xtick=\empty,
      axis x line=bottom,
      axis y line=left,
      enlarge x limits=0.1,
      line width = 1pt,
      width = 6cm,
      height = 4cm
    ]

      \addplot[fill=metalblue!30,draw = metalblue2!40] coordinates {
        (0,50)
      };
        \addplot[fill=metalblue,draw = metalblue2] coordinates {
        (0,42)
        (1,35)
        (2,25)
        (3,20)
        (4,10)
        (5,7)
        (6,6)
      };
      \addplot[fill=metalblue!30,draw = metalblue2!40] coordinates {
        (6,5.5)
        (7,3.5)
        (8,1.5)
        (9,1.5)
        (10,1.5)
        (11,1)
      };

      \node at (axis cs:10,10) {\color{metalblue}$\epsilon$};

      \node at (axis cs:0.5,47) {\color{metalblue}$p_{0}$};
    \end{axis}

  \end{tikzpicture}
  \caption{The vector $P_{-\epsilon}^{-\max}$ is obtained from $P$ by removing the largest element $p_0$ of $P$ as well as the smallest probabilities that accumulate to a total weight bounded by $\epsilon$.}
  \label{fig:p-epsilon-max}
\end{figure}

We now proceed in two steps.
First, we show lower and upper bounds on the quantity $\norm{P_{-\epsilon}^{-\max}}_{2/3}$ in terms of the largest probability $p_0$ occurring in $P$ and its support $\norm{P_{-\epsilon}^{-\max}}_0$ as given by
\begin{equation}
  \label{eq:2/3_bounds_maintext}
\begin{split}
  p_0^{-\frac12}  & \left(1- \epsilon -p_0 \right)^{3/2}
  \leq \norm{P_{- \epsilon}^{- \max} } _{2/3}
  \\
  & \leq \left (1- p_0 \right) \norm{P_{- \epsilon}^{-\max} }_0^{\frac12} .
\end{split}
\end{equation}
Then it follows from Eqs.~\eqref{eq:main_valiant_sample_complexity} and \eqref{eq:2/3_bounds_maintext} that the sample complexity of certifying a distribution $P$ up to a constant total-variation distance $\epsilon$ is essentially lower bounded by $1/\sqrt{p_0}$.
Hence, if $P$ is exponentially flat in the sense that the largest probability is exponentially small in the problem size (here, the number of particles), $\epsilon$-certification requires exponentially many samples.
Conversely, if $P_{- \epsilon/16}^{-\max}$ is supported on polynomially many outcomes only, sample-efficient certification is possible by the converse bound \eqref{eq:main_valiant_sample_complexity_upper}.

Second, we connect this result to the output distributions of ``quantum supremacy'' schemes.
In all schemes that rely on the Stockmeyer argument, the problem instances are defined in terms of a unitary \ifjournal $U$ \fi  that is \emph{randomly chosen} from some restricted family, e.g., linear optical circuits in the case of boson sampling \cite{aaronson_computational_2010} or random universal circuits \cite{boixo_characterizing_2016,bouland_quantum_2018} in a qubit architecture \ifjournal , captured by a respective measure $\mu_n$ on the $n$-particle unitary group\fi.
\ifarxiv
Specifically, we prove that with high probability over the choice of the random unitary, the distribution over outputs associated with this unitary is exponentially flat.
\fi
\ifjournal 
Specifically, we prove that with high probability over the choice of the random unitary, the distribution over outputs $P_U $ associated with this unitary (induced via $P_U(S) = | \bra S U \ket{S_0} |^2$, $U \sim \mu_n$) will be exponentially flat in the sense that $\Hmin(P_U) \geq \Omega(n)$. 
Here, $\Hmin(P) = - \log \max_x p_x$ is the min-entropy of $P$.
We show that, ironically, this follows from an upper bound on the second moments of the output probabilities, which is at the same time a central ingredient in the Stockmeyer hardness argument for \emph{approximate} sampling.
Specifically, we prove that with probability at least $1- \delta$ over the choice of $U$ (see Lemma~\ref{lem:anticon-minentropy} in \refdetailsandproof)
\begin{align}
  \Hmin(P_U) \geq \frac12 \left( \log  \delta - \log \sum_{S } \mathbb{E}_{U\sim \mu_n}[P_U(S)^2] \right).
\end{align}
\fi

Putting everything together we obtain lower bounds on the sample complexity of certification for boson sampling, IQP circuit sampling and random universal circuit sampling with (sufficiently many) $n$ particles.
In all of these cases, the sample complexity scales at least as fast as
\begin{equation}
  \frac 1 {\epsilon^2} (2^n \delta)^{1/4} \, ,
\end{equation}
with probability at least $1-\delta$ over the random choice of the unitary.

The upshot is: a key ingredient of the proof of approximate sampling hardness as effected by the random choice of the unitary prohibits sample-efficient certification.
%

%
We show that one cannot hope for purely classical, non-interactive, device-independent certification of the proposed quantum sampling problems.
This highlights the importance of devising alternative schemes of certification, or improved hardness results for more peaked distributions.
We hope to stimulate research in such directions.

A particularly promising avenue of this type of certification has been pioneered by \citet{shepherd_temporally_2009}:
By allowing the certifier to choose the classical input to the sampling device rather than drawing it fully at random, it is under some plausible cryptographic assumptions possible to efficiently certify the correct implementation of a quantum sampler from its classical outcomes.
This is facilitated by checking a previously hidden bias in the obtained samples and has been achieved for a certain family of IQP circuits \cite{shepherd_temporally_2009}.
However, in contrast to Ref.~\cite{bremner_average-case_2016}, there is no approximate sampling hardness result for this family. 

Focusing on so-called \emph{relational problems} as opposed to sampling problems, it has been argued via new complexity-theoretic conjectures that the task \emph{HOG} of outputting the \emph{heavy outcomes} of a quantum circuit (those outcomes with probability weight larger than the median of its output distribution) is classically intractable \cite{aaronson_complexity-theoretic_2017}.
Clearly, this task is sample-efficiently checkable via its in-built bias, but still requires exponential classical computation to determine the probabilities of the obtained samples, which are compared to the median.

Taking a pragmatic stance, one can make additional assumptions on the device.
In fact, only recently has it been shown \cite{bouland_quantum_2018} that cross-entropy measures \cite{boixo_characterizing_2016} provide direct bounds on the total-variation distance provided the entropy of the real distribution is larger than that of the target distribution.
One might also be content with weaker notions of certification in total-variation distance such as the certification of a coarse-grained version of the full output distribution \cite{wang_certification_2016}. 
Coarse-graining procedures are practically useful as corroboration schemes when distinguishing against plausible alternative distributions such as the uniform distribution, but of course fail to certify against adversarial distributions on the full sample space.
All such approaches yield sample-efficient certificates that require exponential computational effort, rendering them feasible at least for intermediate-scale devices.

Another way to certify a sampling device is the certification of the entire machine from its components.
However, such schemes need to make assumptions about the absence of unwanted influences between the components such as crosstalk.
In a similar vein, one can make use of implementation details and give the certifier some quantum capabilities such as access to a small quantum computer \cite{wiebe_hamiltonian_2014}, the ability to manipulate single qubits \cite{Mills2017}, or to measure the output quantum state in different bases with trusted quantum detectors \cite{Hangleiter,Badescu2017} to devise certificates even in non-iid.\ settings \cite{takeuchi_verification_2018}.
In this way, one can partially trade-in the simplicity of sampling schemes for better certifiability.

%

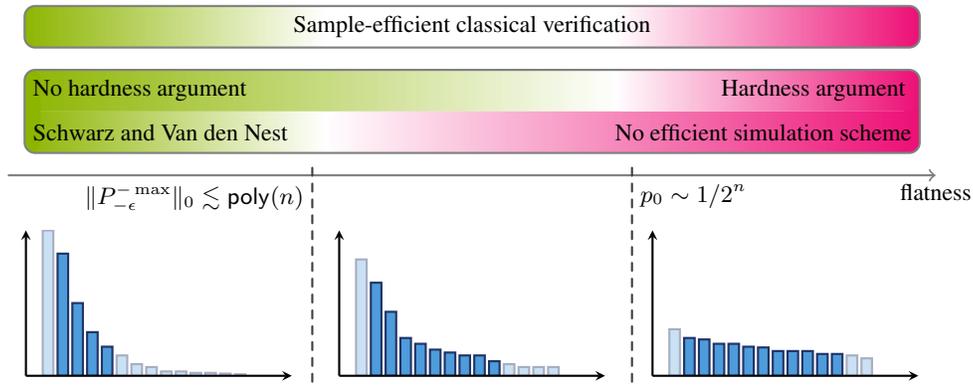
\begin{figure*}[t]
\begin{tikzpicture}[scale = .85,anchor = center]

  \def\L{7.25};
  \def\l{7};
  \def\h{2};
  \def\H{3};
  \def\ny{1.};
  \def\hy{.65}
  \def\pol{-2.5};
  \def\ex{+2.5};



  \draw [->,edges] (-\L,0) -- node [at end, black, below, align = left] {flatness}
  (\L,0);

  \node (lt) at (-\l,\ny + \hy) {};
  \node (lt2) at (-\l,\ny) {};
  \node (rt) at (\l,\ny - \hy) {};
  \node (lb) at (-\l , 2 * \ny + \hy -\hy) {};
  \node (rb) at (\l ,2 * \ny + \hy) {};

 \node (rbtc) at (\ex -.25,2 * \ny + \hy  -\hy) {};

  \node (lbtc) at (\pol + .25,2 * \ny + \hy) {};
  \node (ltbc) at (\pol +.25,\ny -\hy) {};
  \node (ltbc2) at (\pol +.25 + 0.66*\l,\ny) {};
  \node (rtc) at (\ex - .5, - 2, 0) {};

  \draw [rounded corners, draw = none,shading angle = 90, left color = applegreen, right color = white] (lb) rectangle  (lbtc) ;

  \draw [rounded corners, draw = none,shading angle = 90, left color = applegreen, right color = white] (lt) rectangle  (ltbc2) ;
  
  \draw [rounded corners, draw = none,shading angle = 90, left color = applegreen, right color = white] (lt2) rectangle  (ltbc) ;


  \draw [rounded corners, draw = none, fill =pink
  ] (\l,\ny-\hy) rectangle   (\l - .25,\ny + \hy) ;

  \draw [rounded corners, draw = none, fill =applegreen
  ] (-\l,\ny-\hy) rectangle   (-\l + .25,\ny + \hy) ;

  \draw [rounded corners, draw = none,shading angle = 90, left color =white, right color =pink] (rt) rectangle   (\pol+.25,\ny ) ;

  \draw [rounded corners, draw = none,
  shading angle = 90, left color =white, middle color = pink, right color =pink
  ] (\l,\ny) rectangle   (\ex - .25,\ny + \hy) ;

 \draw [rounded corners, draw = none,shading angle = 90, left color = white, right color = pink] (rb) rectangle  (rbtc) ;

  \draw [edges, rounded corners, fill = none] (lt) rectangle  (rt) ;

  \draw [edges, rounded corners,fill = none] (lb) rectangle  (rb) ;


  \node at (\l,\ny + \hy/2) [anchor = east] {Hardness argument \quad};
  \node at (-\l,\ny + \hy/2) [anchor = west] {No hardness argument};
  \node at (\l,\ny - \hy/2) [anchor = east] {No efficient simulation scheme};
  
  \node at (-\l,\ny -\hy/2) [anchor = west] {Schwarz and Van den Nest \quad};

  \node at (0,2 * \ny + 0.5*\hy ) {Sample-efficient classical verification};

  \node (poln)  at (\pol, .25) {};
  \node (exn)  at (\ex, .25) {};

  \node [anchor = north east] at (\pol,0) {$\norm{P_{-\epsilon}^{-\max}}_0 \lesssim \poly(n)$};

  \node [anchor = north west] at (\ex,0) {$p_0 \sim 1/2^n$};

  \node (flat) at (-.7 * \l,- \h) {

\begin{tikzpicture}[scale = .8,anchor = center,every node/.style={}]
    \begin{axis}[
      ybar,
      axis on top,
      bar width=5pt,
      ymin=0,
      ymax =50,
      xmin = -1,
      xmax = 14,
      ytick=\empty,
      xtick=\empty,
      axis x line=bottom,
      axis y line=left,
      enlarge x limits=0.1,
      line width = 1pt,
      width = 6cm,
      height = 4cm
    ]

      \addplot[fill=metalblue!30,draw = metalblue2!40] coordinates {
        (0,50)
      };
        \addplot[fill=metalblue,draw = metalblue2] coordinates {
        (0,42)
        (1,25)
        (2,15)
        (3,10)
      };
      \addplot[fill=metalblue!30,draw = metalblue2!40] coordinates {
      (3,7)
         (4,4)
        (5,3)
        (6,1.5)
        (7,1.5)
        (8,1)
        (9,1)
        (10,.75)
        (11,.5)
      };
    \end{axis}

  \end{tikzpicture}

  };

  \node (middle) at (0 ,- \h) {

\begin{tikzpicture}[scale = .8,anchor = center,every node/.style={}]
    \begin{axis}[
      ybar,
      axis on top,
      bar width=5pt,
      ymin=0,
      ymax =50,
      xmin = -1,
      xmax = 14,
      ytick=\empty,
      xtick=\empty,
      axis x line=bottom,
      axis y line=left,
      enlarge x limits=0.1,
      line width = 1pt,
      width = 6cm,
      height = 4cm
    ]

      \addplot[fill=metalblue!30,draw = metalblue2!40] coordinates {
        (0,40)
      };
        \addplot[fill=metalblue,draw = metalblue2] coordinates {
        (0,32)
        (1,22)
        (2,13)
        (3,11)
        (4,9)
        (5,8)
        (6,7)
        (7,7)
        (8,5)
      };
      \addplot[fill=metalblue!30,draw = metalblue2!40] coordinates {

        (8,4)
        (9,3)
        (10,3)
        (11,3)
      };
    \end{axis}

  \end{tikzpicture}

  };
  \node (peaked) at (.7 * \l,- \h) {

\begin{tikzpicture}[scale = .8,anchor = center,every node/.style={}]
    \begin{axis}[
      ybar,
      axis on top,
      bar width=5pt,
      ymin=0,
      ymax =50,
      xmin = -1,
      xmax = 14,
      ytick=\empty,
      xtick=\empty,
      axis x line=bottom,
      axis y line=left,
      enlarge x limits=0.1,
      line width = 1pt,
      width = 6cm,
      height = 4cm
    ]

      \addplot[fill=metalblue!30,draw = metalblue2!40] coordinates {
        (0,16)
      };
        \addplot[fill=metalblue,draw = metalblue2] coordinates {
        (0,13)
        (1,12.5)
        (2,11)
        (3,11)
        (4,10)
        (5,9.75)
        (6,8.5)
        (7,8.5)
        (8,8.5)
        (9,7.5)
        (10,7.5)
      };
      \addplot[fill=metalblue!30,draw = metalblue2!40] coordinates {

        (10,7)
        (11,6)
      };
    \end{axis}

  \end{tikzpicture}

  };

  \draw [dashed,edges,DarkGray] (exn) -- (exn|-flat.south);
  \draw [dashed,edges,DarkGray] (poln) -- (poln|-flat.south);




\end{tikzpicture}
\caption{
\label{fig:roominthemiddle}
Hardness and certification in terms of the flatness of $P_{-\epsilon}^{-\max}$ for the example of IQP circuits \cite{shepherd_temporally_2009,bremner_average-case_2016} on $n$ qubits as obtained from the present result and the classical simulation algorithm of \citet{schwarz_simulating_2013}.
There, it is shown that a certain natural family of quantum circuits (including IQP circuits) can be efficiently simulated on a classical computer if the output distribution is essentially concentrated on a polynomial number of outcomes only.
In this case, i.e., for $\norm{P_{-\epsilon}^{-\max}}_0 \lesssim \poly(n)$, the output distribution is also sample-efficiently certifiable as the bounds~\eqref{eq:main_valiant_sample_complexity_upper} and \eqref{eq:2/3_bounds_maintext}
show.
Their classical simulation algorithm breaks down if the distribution is essentially spread out over more than polynomially many outcomes, and we even have a rigorous hardness argument by \citet{bremner_average-case_2016} for exponentially flat distributions.
Conversely, the number of samples required for certification becomes prohibitively large if the distribution is exponeentially spread out, as measured by the $\ell_{2/3}$-norm \eqref{eq:main_valiant_sample_complexity}.
Nevertheless,
as we illustrate here,
there could be ``room in the middle'' where, for reasonably but not exponentially flat distributions, one may hope to find tasks that are both classically intractable and sample-efficiently certifiable in a device-independent fashion.
}
\end{figure*}

It is interesting to note the connection of our result with results on classical simulation.
Similarly to our findings for the case of certification, \citet{schwarz_simulating_2013} find that for certain natural families of quantum circuits (including IQP circuits) classical simulation is possible for highly concentrated distributions, but impossible for flat ones, see Figure~\ref{fig:roominthemiddle}.
This again gives substance to the interesting connection between superior computational power, the flatness of the distribution and the impossibility of an efficient certification.

Curiously, at the same time, the property that prohibits sample-efficient certification is by no means due to the hardness of the distribution.
It is merely the flatness of the distribution on an exponential-size sample space as effected by the random choice of the unitary that is required for the approximate hardness argument via Stockmeyer's algorithm and standard conjectures.
The uniform distribution on an exponentially large sample space, which is classically efficiently samplable, can also not be sample-efficiently certified.

A further noteworthy connection is that to Shor's algorithm.
The output distribution of the quantum part of Shor's algorithm is typically spread out over super-polynomially many outcomes and can hence neither be efficiently simulated via the algorithm of \citet{schwarz_simulating_2013}, nor certified as we show here.
However, after the classical post-processing, the output distribution is strongly concentrated on few outcomes --- the factors --- from which one can verify the correct working of the algorithm.
A certification of the intermediate distribution is simply not necessary to demonstrate a quantum speedup in Shor's algorithm, as its speedup is derived from it solving a problem in \np~and not from it sampling close to a hard distribution.
This shows that while intermediate steps of a computation might not be certifiable, the final outcome may well be.
Whether this is enough to demonstrate a speedup depends on the nature of the hardness argument.
In fact, the abovementioned task HOG \cite{aaronson_complexity-theoretic_2017} bears many similarities to factoring and its certifiability from the outcomes of the algorithm.

We hope that our result will stimulate research into new ways of proving hardness of approximate sampling tasks that are more robust than those based on anti-concentration, as well as into devising alternative verification schemes possibly based on mild and physically reasonable assumptions on the sampling device or the verifier.

\ifjournal


\section*{Acknowledgements} 
We are grateful to Adam Bouland who pointed us to the literature on property testing and thus provided the missing clue for finishing this project.
We would like to thank Ashley Montanaro, Bill Fefferman, Tomoyuki Morimae, Martin Schwarz, and Juan Bermejo-Vega for fruitful discussions and Aram Harrow and Anand Natarajan for sharing an early version of their related work. 
Finally, we would like to thank two anonymous referees for their thorough proof-checking and interesting questions which helped improve the manuscript.

D.\ H.\ and J.\ E.\ acknowledge support from the ERC (TAQ), the Templeton Foundation, the DFG (EI 519/14-1, EI 519/9-1, EI 519/7-1, CRC 183), and the European Union’s Horizon 2020 research and innovation programme under grant agreement No.\ 817482 (PASQUANS).
C.\ G.\ acknowledges support by the European Union’s Marie Sk\l{}odowska-Curie Individual Fellowships
(IF-EF) programme under GA: 700140 as well as financial support from ARO under contract W911NF-14-1-0098 (Quantum Characterization, Verification, and Validation), the Spanish Ministry MINECO (National Plan 15 Grant: FISICATEAMO No.\ FIS2016-79508-P, SEVERO OCHOA No.\ SEV-2015-0522), Fundació Cellex, Generalitat de Catalunya (Grants No.\ SGR 874 and No.\ 875, CERCA Programme, AGAUR Grant No.\ 2017 SGR 1341 and CERCA/Program), ERC (CoG QITBOX and AdG OSYRIS), EU FETPRO QUIC, EU STREP program EQuaM (FP7/2007–2017, Grant No.\ 323714), and the National Science Centre, Poland-Symfonia Grant No.\ 2016/20/W/ST4/00314.

\fi

\ifjournal

\putbib
\end{bibunit}
\begin{bibunit}
  \cleardoublepage
  \widetext
  \setcounter{page}{1}
  \setcounter{equation}{0}
  \setcounter{footnote}{0}
  \thispagestyle{empty}

  \begin{center}
  \textbf{\large Supplementary Material for\\Sample complexity of device-independently certified ``quantum supremacy''}\\
  \vspace{2ex}
  Dominik Hangleiter, Martin Kliesch, Jens Eisert, and Christian Gogolin
  \end{center}
  \twocolumngrid
  \renewcommand\thesection{S\arabic{section}}
\fi

	\section{Setup and definitions}

Let us begin the technical part of this work by setting the notation.
We use the Landau symbols $O$ and $\Omega$ for asymptotic upper and lower bounds and $\Theta$ for their conjunction.
For any $j \in \mb Z^+$ we employ the short hand notation $[j] \coloneqq \{1,\dots,j\}$ for the range.
By $\log$ we denote the logarithm to basis $2$.
For any vector $x \in \mb R^n$ we define
$\norm{x}_{\infty} \coloneqq \max_{i \in [n]} |x_i|$
and
$\norm{x}_{p} \coloneqq \left( \sum_{i=1}^n |x_i|^p \right)^{1/p}$ for
$0 < p < \infty$ and take
$\norm{x}_0 \coloneqq |\{ i \in [n]: x_i \neq 0 \}|$
to denote the number of non-zero elements of $x$.
Thus, $\norm{\cdot}_{p}$ is the standard $\ell_p$-norm whenever $p\geq 1$.
For $p\in(0,1)$, $\norm{\cdot}_{p}$ no longer satisfies the triangle inequality, but it is obviously absolutely homogeneous and still defines a quasinorm.
We will also make use of the $\alpha$-R\'enyi entropies, which for any probability vector $P = (p_1, \ldots, p_n), \, p_i \geq 0 ,\, \sum_i p_i = 1$ and $0 \leq \alpha \leq \infty, \alpha \neq 1$ are defined to be
\begin{align}
  H_{\alpha}(P) \coloneqq \frac\alpha{1-\alpha} \log \norm{P}_{\alpha} .
\end{align}
We refer to $\Hmin(P) = -  \log \max_{i \in [n]} p_i$ as the \emph{min-entropy} of $P$.

We are now in the position
to formalize the notion of a test that certifies that a given device indeed samples from a distribution sufficiently close to a given target distribution.
More precisely, we consider a family of sample spaces $\mc E_n$ with $n \in \mathbb{Z}^+$.
The parameter $n$ will be the natural problem size in the concrete examples below.
The object of interest is a classical algorithm $\mc T_n$ which receives a description of a target distribution $P_n$ over $\mc E_n$, and a sequence $\mc S \sim Q^s$ of $s$ samples $S_1, \ldots, S_s \in \mc E_n$ that have been drawn i.i.d.\ from some distribution $Q$ over $\mc E_n$ and must output $1$ or $0$ for ``accept'' or ``reject'', respectively.
We illustrate this notion of certification in Figure~\ref{fig:testing scheme}.

\begin{definition}[Certification test]
\label{def:certification}
  For any $n$ let $P$ be a (target) probability distribution on a sample space $\mc E$.
  We call
  $\mc T: \mc E^s \to \{0,1\}$
  an $\epsilon$-\emph{certification test} of $P$ from $s$ samples if the following completeness and soundness conditions are satisfied for any distribution $Q$ over $\mc E$:
  \begin{align}
    Q = P
    & \ \Rightarrow \
    \Pr_{ \mc S \sim Q^s } [\mc T(\mc S) =1 ] \geq \frac23,
    \\
    \norm{P-Q}_{1} > \epsilon
    &\ \Rightarrow\
    \Pr_{ \mc S\sim Q^s} [\mc T(\mc S) =1 ] < \frac13 \, .
  \end{align}
  For a family $\{P_n\}$ of probability distributions we call a family of tests $\{\mc T_{n}\}$ a \emph{sample-efficient $\epsilon$-certification test} if for every $n$ $\mc T_{n}$ is an $\epsilon$-certification test from $s \in O(\poly(n,1/\epsilon))$ samples.
\end{definition}

Our notion of certification is \emph{device-independent} in the sense that it does not assume anything about the internal working of the sampler (not even whether it is quantum or classical), but uses only the classical samples it outputs and a classical description of the target distribution.
Among such device-independent certification scenarios, our scenario is the most general one in the sense that the certifier is given \emph{all} the information contained in the target distribution.
In particular, it is crucial that we explicitly allow the certification test $\mc T_n$ to depend on all details of the target distribution $P_n$.

As we are not concerned with the computational complexity of the test, but only its sample complexity, we allow the certification algorithm \emph{unlimited} computational power.
In particular, it does not matter how exactly $\mc T_n$ is given access to a description of $P_n$, but for the sake of concreteness $\mc T_n$ can be thought of as having access to an oracle that provides the probabilities $P_n(S)$ of all $S \in \mc E_n$ up to arbitrary precision.
Sample-efficiency is clearly a necessary requirement for computational efficiency of a test, as any test takes at least the time it needs to read in the required number of samples, so that lower bounds on the sample complexity are stronger than such on the computational complexity.

We note that our notion of certification corresponds to what in the literature on property testing \cite{goldreich_introduction_2017} is called \emph{identity testing} with a fixed target distribution.
It stands in contrast to the previously considered task of \emph{state discrimination} \cite{gogolin_boson-sampling_2013,aaronson_bosonsampling_2013}, where the task is to decide from which of two given distributions $P$ or $Q$ a device samples.
$\epsilon$-certification in the sense of Definition~\ref{def:certification} is more demanding in the sense that $P$ has to be distinguished from \emph{all} distributions $Q$ such that $\norm{P-Q}_{1} \geq \epsilon$.
It is precisely this type of certification that is necessary to convince a skeptic of ``quantum supremacy'' via, say boson sampling, as the hardness results on approximate boson sampling only cover distributions within a small ball in $\ell_1$-norm around the ideal target distribution.
A device sampling from a distribution further away from the ideal distribution, might still be doing something classically intractable, but this cannot be concluded from the hardness of approximate boson sampling.


\section{No certification of flat distributions}

This section is concerned with the question of whether distributions with a high \emph{min-entropy} can be certified in a sample-efficient way.
The main insights into this question come from a work by \citet{valiant_automatic_2017} on property testing, which gives a \emph{sample-optimal} certification test (up to constant factors) for any fixed distribution $P$, as well as a lower bound on the sample complexity of certification.
The result is stated in terms of an $\ell_{2/3}$-norm of a vector obtained from the distribution.
Our main technical contribution is to find bounds on these quasi-norms that are relevant in the context of certifying ``quantum supremacy'' distributions.

To state the main result of Ref.~\cite{valiant_automatic_2017}, we adapt their following notation and illustrate it in Figure~\ref{fig:p-epsilon-max}.
For any vector of non-negative numbers $P$,
\begin{enumerate}[(i)]
\item let $P^{- \mathrm{max}}$ be the vector obtained from $P$ by setting the largest entry to zero, and
\item let $P_{-\epsilon}$ be the vector obtained from $P$ by iteratively setting the smallest entries to zero, while the sum of the removed entries remains upper bounded by $\epsilon>0$.
\end{enumerate}
It turns out that the optimal sample complexity for $\epsilon$-certifying any distribution $P$ is essentially given by $\frac 1 {\epsilon^2} \norm{P_{-\epsilon}^{- \mathrm{max}} } _{2/3}$.
The intuition is that any $\epsilon$ deviation from $P$ that is contained in either the largest probability or the tail of the distribution is easily detected.
Intuitively, this is because a constant deviation in these parts of the distribution will be visible in the samples obtained with high probability \cite{valiant_automatic_2017}.
More precisely, the following upper and lower bounds on the sample complexity of certification hold:
\begin{theorem}[Optimal certification tests \cite{valiant_automatic_2017}]
\label{thm:optimal tests}
  There exist constants $c_1, c_2 > 0 $ such that for any $\epsilon > 0$ and any target distribution $P$, there exists an $\epsilon$-certification test from $c_1\,\max \{ \frac 1 \epsilon, \frac1{\epsilon^2}\,\norm{P_{- \epsilon/16}^{- \mathrm{max}} } _{{2/3}} \}$ many samples, but there exists no $\epsilon$-certification test from fewer than $c_2\, \max \{ \frac 1 \epsilon, \frac 1{\epsilon^2}\,\norm{P_{- 2\epsilon}^{- \mathrm{max}} } _{{2/3}}  \}$ samples.
\end{theorem}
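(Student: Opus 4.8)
The result is precisely the main theorem of \citet{valiant_automatic_2017}, so the plan is to reconstruct their argument, which consists of an achievability part and a matching hardness part, both of which first reduce the problem to the ``bulk'' of the distribution. The reduction is the following observation: if the unknown $Q$ differs from $P$ by $\Omega(\epsilon)$ of total-variation mass either on the single heaviest outcome or on a set of light outcomes carrying a tail of mass $\Theta(\epsilon)$, then the empirical frequency of that outcome (respectively, of the event ``the sample lands in the tail'') deviates from its mean by $\Omega(\epsilon)$, which is detected with $O(1/\epsilon)$ samples by a Chernoff bound. Hence one may assume the discrepancy lives in $P_{-c\epsilon}^{-\max}$ and only needs to test the bulk; the constants $16$ and $2$ appearing in the two bounds are exactly the slack one respectively loses and can afford in this split, which is why the two bounds do not match exactly.

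For the upper bound I would Poissonize the sample size, so that the counts $X_i$ become independent $\mathrm{Poisson}(sp_i)$ variables, and use the unbiased quadratic statistic $Z=\sum_i w_i\big((X_i-sp_i)^2-X_i\big)$ with weights $w_i$ to be optimized. Under $Q=P$ one has $\Eb[Z]=0$, whereas $\norm{P-Q}_1>\epsilon$ forces $\Eb[Z]=s^2\sum_i w_i(p_i-q_i)^2$ to be bounded below; estimating $\mathrm{Var}[Z]$ as an explicit sum of Poisson moments (dominated by terms like $\sum_i w_i^2 s^2 p_i^2$ and $\sum_i w_i^2 s^3 p_i^2(p_i-q_i)^2$) and invoking Chebyshev yields a valid test as soon as $s$ exceeds the ratio of (mean gap) squared to variance. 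Minimizing that threshold over the weights gives $w_i\propto p_i^{-2/3}$ and turns it into $s\gtrsim \frac{1}{\epsilon^2}\norm{P_{-\epsilon/16}^{-\max}}_{2/3}$; a union bound with the two cheap tests for the heaviest entry and the tail completes the achievability direction.

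For the lower bound I would run a Le Cam-style two-point argument against a randomized family of alternatives: pair up the outcomes of the bulk and, independently for each pair $(i,i')$ and a uniformly random sign $\sigma$, move a mass $\approx p_i\delta_i$ from one element of the pair to the other, so that every resulting $Q$ is a genuine probability distribution with $\norm{P-Q}_1=\sum_i p_i\delta_i$. Choosing the $\delta_i$ to maximize this distance subject to the constraint that $s$ Poissonized samples cannot distinguish $P$ from the mixture of these $Q$'s is a Lagrange problem: the indistinguishability constraint reads, up to constants, $s^2\sum_i p_i^2\delta_i^4=O(1)$ (it arises from a $\cosh$-expansion of the $\chi^2$-divergence between the mixture and $P$), and maximizing $\sum_i p_i\delta_i$ under it gives $\delta_i\propto p_i^{-1/3}$ and hence $\sum_i p_i\delta_i\propto s^{-1/2}\big(\sum_i p_i^{2/3}\big)^{3/4}=s^{-1/2}\norm{P}_{2/3}^{1/2}$. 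Setting this equal to $\epsilon$ shows that for $s\lesssim \frac{1}{\epsilon^2}\norm{P_{-2\epsilon}^{-\max}}_{2/3}$ there is a sub-family of alternatives that are all $\epsilon$-far from $P$ yet indistinguishable from it, i.e.\ no certification test exists, which is the claimed impossibility.

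The step I expect to be the main obstacle is making the indistinguishability estimate in the lower bound rigorous: one has to bound the total-variation distance between the law of the sample \emph{fingerprint} (the histogram of the histogram of counts) under $P$ and under the random mixture of $Q$'s, and to show --- via Poissonization and a Taylor expansion in the $\delta_i$ --- that it is controlled by the $\chi^2$-type quantity above, with all higher-order terms and the effect of renormalization negligible. Alongside this, one must check that the single heaviest entry and the $\Theta(\epsilon)$-tail really can be stripped off without loss of generality and with only the stated constant slack, and that the outer $\max\{1/\epsilon,\,\cdot\,\}$ genuinely captures both the $\epsilon$-dominated and the $\norm{\cdot}_{2/3}$-dominated regimes. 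These are the parts where the bookkeeping of \citet{valiant_automatic_2017} does the real work.
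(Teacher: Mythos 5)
This theorem is not proved in the paper at all: it is imported verbatim from \citet{valiant_automatic_2017} and used as a black box, so there is no in-paper proof to compare against. Your reconstruction is a faithful outline of the actual Valiant--Valiant argument --- the reduction that strips the heaviest element and an $O(\epsilon)$ tail, the Poissonized statistic $\sum_i p_i^{-2/3}\bigl((X_i-sp_i)^2-X_i\bigr)$ for achievability, and the $\delta_i\propto p_i^{-1/3}$ perturbation family giving $\epsilon\sim s^{-1/2}\norm{P}_{2/3}^{1/2}$ for the converse --- and the places you flag as needing real work (the fingerprint indistinguishability bound and the constant bookkeeping behind the $\epsilon/16$ versus $2\epsilon$ truncations) are exactly where the technical content of that paper resides.
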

We note that $\norm{P_{- \epsilon}^{- \mathrm{max}} } _{{2/3}} \leq \norm{P} _{{2/3}}$ for any $P$, and in many cases the former is only a constant factor away from the latter.
We obtain the following general bounds on $\norm{P_{- \epsilon}^{- \mathrm{max}} } _{{2/3}}$ in terms of the min-entropy and support of $P$:
\begin{lemma}[Bounds on $\norm{P_{- \epsilon}^{- \mathrm{max}} } _{{2/3}}$]
\label{lem:bounds} \quad
\begin{equation}
\label{eq:2/3 bounds}
\begin{split}
  2^{\frac12\Hmin(P)} & \left(1- \epsilon - 2^{-\Hmin(P)} \right)^{3/2}
  \leq \norm{P_{- \epsilon}^{- \mathrm{max}} } _{{2/3}} \\& \leq \left (1 - 2^{-\Hmin(P)} \right) \norm{P_{- \epsilon}^{-\max} }_0^{\frac12} .
\end{split}
\end{equation}
\end{lemma}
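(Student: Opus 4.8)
The plan is to derive both inequalities by elementary manipulations, relying on only two structural facts about the vector $P_{-\epsilon}^{-\max}$. Write $p_0 \coloneqq \max_i p_i = 2^{-\Hmin(P)}$, let $T$ denote the support of $P_{-\epsilon}^{-\max}$, and set $k \coloneqq \norm{P_{-\epsilon}^{-\max}}_0 = |T|$. The two facts are: (a) every entry of $P_{-\epsilon}^{-\max}$ is at most $p_0$, because the single largest entry of $P$ has been deleted; and (b) the entries of $P_{-\epsilon}^{-\max}$ sum to at least $1 - \epsilon - p_0$ and at most $1 - p_0$, because only the largest entry (of weight $p_0$) and a tail of total weight at most $\epsilon$ have been removed. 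I will assume $1 - \epsilon - p_0 > 0$, since otherwise the lower bound is vacuous.

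For the lower bound I would use (a) in the form $p_i^{2/3} = p_i \, p_i^{-1/3} \ge p_i \, p_0^{-1/3}$ for each $i \in T$. Summing over $T$ and invoking (b) gives $\sum_{i\in T} p_i^{2/3} \ge p_0^{-1/3}\sum_{i\in T} p_i \ge p_0^{-1/3}(1-\epsilon-p_0)$; raising to the power $3/2$ yields $\norm{P_{-\epsilon}^{-\max}}_{2/3} \ge p_0^{-1/2}(1-\epsilon-p_0)^{3/2}$, which is the claimed bound after substituting $p_0 = 2^{-\Hmin(P)}$.

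For the upper bound I would apply H\"older's inequality to $\sum_{i\in T} p_i^{2/3}\cdot 1$ with conjugate exponents $3/2$ and $3$, obtaining $\sum_{i\in T} p_i^{2/3} \le \bigl(\sum_{i\in T} p_i\bigr)^{2/3}\bigl(\sum_{i\in T} 1\bigr)^{1/3} = \bigl(\sum_{i\in T} p_i\bigr)^{2/3} k^{1/3}$. Raising to the power $3/2$ and using (b) gives $\norm{P_{-\epsilon}^{-\max}}_{2/3} \le \bigl(\sum_{i\in T} p_i\bigr)\, k^{1/2} \le (1-p_0)\, k^{1/2}$, i.e.\ the stated bound with $p_0 = 2^{-\Hmin(P)}$ and $k = \norm{P_{-\epsilon}^{-\max}}_0$.

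All steps are routine; the only point requiring care is the weight bookkeeping behind (b) — in particular, checking that the deleted maximum and the deleted tail do not overlap (the maximum is never among the smallest entries unless the distribution is essentially uniform, in which case the estimate $\sum_{i\in T}p_i \ge 1-\epsilon-p_0$ still holds) — and treating the degenerate regime $1-\epsilon-p_0 \le 0$, where the lower bound reduces to the trivial $\norm{P_{-\epsilon}^{-\max}}_{2/3}\ge 0$. I do not anticipate any substantive obstacle beyond this.
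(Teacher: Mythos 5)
Your proposal is correct and follows essentially the same route as the paper: the lower bound via $p_i^{2/3}\ge p_0^{-1/3}p_i$ together with $\norm{P_{-\epsilon}^{-\max}}_1\ge 1-\epsilon-p_0$ (the paper phrases this as a concavity estimate, but it is the same inequality), and the upper bound via the standard comparison $\norm{v}_{2/3}\le \norm{v}_0^{1/2}\norm{v}_1$, which you rederive from H\"older while the paper simply cites it. No substantive difference.
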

To get a feeling for what these bounds imply, let us consider two special cases and sufficiently small $\epsilon$.
If for some constant $\kappa$ it holds that
$\Hmin(P) = \log(\kappa\,|\mc E_n|)$,
they imply the following lower bound on the required minimal number of samples, $s_{\mr {min}} $:
\begin{align}
  s_{\mr {min}}^2 & \geq c_2^2\kappa\,\frac{|\mc E_n|}{\epsilon^4}\left( 1- 2 \epsilon - \frac{1}{\kappa\,|\mc E_n|}\right)^3.
\end{align}
For all distributions whose min-entropy is essentially given by the logarithm of the size $|\mc E_n|$ of the sample space, the sample complexity for certification thus scales at least as the square root of that size.
If, on the contrary, $P_{-\epsilon/16}$ has support on at most $s \geq \norm{P_{-\epsilon/16}}_0$ many probabilities we have the following upper bound
\begin{equation}
  s_{\mr {suf}} \leq c_1\,\frac{1- \frac \epsilon{16}}{\epsilon^2}\,\sqrt{s}
\end{equation}
on the number of samples $s_{\mr {suf}}$ that is sufficient for $\epsilon$-certification.
This bound implies that distributions supported only on polynomially many outcomes can be certified from polynomially many samples.

\begin{proof}[Proof of Lemma~\ref{lem:bounds}]
For the lower bound, we use that concavity of the function $x \mapsto x^{2/3}$ implies that for any fixed $x^\ast > 0$ and any $0 \leq x \leq x^\ast$ we have
\begin{equation}
  x^{2/3} \geq \frac{{x^\ast}^{2/3}}{x^\ast} \, x = {x^\ast}^{-1/3}\,x
\end{equation}
and thus for any (not necessarily normalized) $\tilde P \coloneqq (\tilde p_1, \dots, \tilde p_{\tilde n})$ with $\tilde p_i \geq 0$
\begin{align}
  \norm{\tilde P}_{{2/3}}^{2/3}&  = \sum_{i=1}^{\tilde n} \tilde p_i^{2/3} \geq  \sum_{i=1}^{\tilde n} (\norm{\tilde P}_\infty^{-1/3}\, \tilde p_i) \\
  &= \norm{\tilde P}_\infty^{-1/3} \,\norm{\tilde P}_1.
\end{align}
Using this for $\tilde P = P_{- \epsilon}^{- \mathrm{max}}$ and that both $\norm{P_{- \epsilon}^{- \mathrm{max}}}_\infty \leq \norm{P}_\infty$ and $\norm{P_{- \epsilon}^{- \mathrm{max}} }_{1} \geq 1 - \epsilon - \norm{P}_\infty$ finally implies the lower bound.

For the upper bound, we use that for any vector $v$
and $0< p < q \leq \infty$ (see, e.g., Ref.~\cite[Eq.~(A.3)]{foucart_mathematical_2013})
\begin{align}
	\norm{v}_p \leq s^{\frac1p - \frac1q} \norm{v}_q ,
\end{align}
where $s \geq \norm{v}_0$.
Inserting $p = 2/3$ and $q =1$, one obtains for $v= P_{- \epsilon}^{- \mathrm{max}}$
\begin{equation}
\norm{P_{- \epsilon}^{- \mathrm{max}} } _{{2/3}}
\leq
\norm{P_{- \epsilon}^{- \mathrm{max}} }_0^{\frac12} \norm{P_{- \epsilon}^{- \mathrm{max}} }_1 \, .
\end{equation}
\end{proof}

Valiant and Valiant's result \cite{valiant_automatic_2017} also has immediate consequences on the certifiability of post-selected probability distributions, such as those arising in boson sampling  \cite{aaronson_computational_2010}.
A certification algorithm has to distinguish the target distribution $P$ from all probability distributions that are at least $\epsilon$-far away from $P$.
That is true, in particular, for distributions that differ from $P$ by at least $\epsilon$ in $\ell_1$-norm only on some part $\mc F$ of the sample space, but are identical with $P$ on its complement $\mc F^c$.
Intuitively one can expect that to distinguish such distributions, samples from $\mc F^c$ do not help.
One might hence expect that it should be possible to lower bound the sample complexity of certifying the full distribution by the sample complexity of the post-selected distribution on some subspace $\mc F$ of the sample space, at least as long as the post-selection probability is not too low.

To make this intuition precise, define for any probability distribution $P$ and any subset $\mc F \subset \mc E$ the restriction $P_{\restriction \mc F} \coloneqq (p_i)_{i \in \mc F}$ of $P$ to $\mc F$ (no longer normalized),
as well as the post-selected probability distribution $P_{\mc F} \coloneqq P_{\restriction \mc F}/P(\mc F)$, with post-selection probability $P(\mc F) \coloneqq \norm{P_{\restriction \mc F}}_{1}$.
\begin{lemma}[Lower bounds with post-selected distributions]
  \label{lem:postselection}
  Let $P$ be a probability distribution on $\mc E$.
  Then with $c_2$ the constant from Theorem~\ref{thm:optimal tests} and for any $\epsilon > 0$ and $\mc F \subset \mc E$, there exists no $\epsilon$-certification test of $P$ from fewer than  $c_2 \max \{\frac 1\epsilon, \frac 1 {\epsilon^2} P(\mc F)\,  \norm{(P_{\mc F})_{- 2\epsilon/P(\mc F)}^{-\max}}_{2/3}\}$ samples.
\end{lemma}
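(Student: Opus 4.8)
The plan is a black-box reduction: turn any $\epsilon$-certification test of $P$ into a certification test of the post-selected distribution $P_{\mc F}$ that uses proportionally fewer samples, and then invoke the lower bound of Theorem~\ref{thm:optimal tests} for $P_{\mc F}$. We may assume $P(\mc F) > 0$, since otherwise the second entry of the maximum vanishes and the asserted bound collapses to the universal $c_2/\epsilon$ of Theorem~\ref{thm:optimal tests}.

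First I would isolate the alternatives that agree with $P$ off $\mc F$. To a normalised distribution $Q_{\mc F}$ on $\mc F$ associate the \emph{extension} $Q$ on $\mc E$ with $Q_{\restriction \mc F} = P(\mc F)\, Q_{\mc F}$ and $Q_{\restriction \mc F^c} = P_{\restriction \mc F^c}$. Then $Q$ is a probability distribution with $Q(\mc F) = P(\mc F)$, it equals $P$ exactly when $Q_{\mc F} = P_{\mc F}$, and a one-line computation gives $\norm{P - Q}_1 = P(\mc F)\,\norm{P_{\mc F} - Q_{\mc F}}_1$; in particular $\norm{P-Q}_1 > \epsilon$ if and only if $\norm{P_{\mc F} - Q_{\mc F}}_1 > \epsilon/P(\mc F)$.

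Next, from a hypothetical $\epsilon$-certification test $\mc T$ of $P$ from $s$ samples I would build an $(\epsilon/P(\mc F))$-certification test $\mc T'$ of $P_{\mc F}$. Given black-box access to a device emitting i.i.d.\ samples from an unknown $Q_{\mc F}$ on $\mc F$, $\mc T'$ draws about $s\,P(\mc F)$ of them and then synthesises a length-$s$ sequence: for each of the $s$ positions it flips an independent $P(\mc F)$-biased coin and inserts, on heads, the next unused device sample, and on tails, a fresh independent draw from the known distribution $P_{\mc F^c}$; it then runs $\mc T$ on the sequence and returns its verdict. Conditioned on not exhausting the device samples, this sequence is exactly $s$ i.i.d.\ draws from the extension $Q$, so completeness and soundness of $\mc T'$ at precision $\epsilon/P(\mc F)$ follow from those of $\mc T$ together with the $\ell_1$ identity above. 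Plugging $\mc T'$, viewed as a test of $P_{\mc F}$ at precision $\epsilon' = \epsilon/P(\mc F)$, into the lower bound of Theorem~\ref{thm:optimal tests} — and using $2\epsilon' = 2\epsilon/P(\mc F)$ — yields $s\,P(\mc F) \geq c_2 \max\{P(\mc F)/\epsilon,\ (P(\mc F)^2/\epsilon^2)\,\norm{(P_{\mc F})_{-2\epsilon/P(\mc F)}^{-\max}}_{2/3}\}$, and dividing through by $P(\mc F)$ gives precisely the claimed bound.

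The step that needs genuine care — and the main obstacle — is the sample count: $\mc T'$ consumes a $\mathrm{Binomial}(s, P(\mc F))$, not a deterministic, number of device samples, so it is not literally ``a test from $s\,P(\mc F)$ samples''. I would resolve this in the standard way, either by passing to the Poissonised sampling model, where the split of the $Q$-stream into its $\mc F$- and $\mc F^c$-substreams is exact and independent, so $\mc T'$ uses precisely a $\mathrm{Poisson}(s\,P(\mc F))$ number of device samples and the reduction is lossless (giving the constant $c_2$ itself), or, in the fixed-sample model, by pre-drawing $s\,P(\mc F) + O(\sqrt{s\,P(\mc F)\,\log(1/\delta)})$ device samples, which fails to suffice only with probability $\delta$, and restoring the $2/3$--$1/3$ gap by an $O(1)$-fold majority vote; the resulting constant-factor loss only weakens regimes already dominated by the $1/\epsilon$ term. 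Everything else is the routine algebra of tracking the three powers of $P(\mc F)$ through the rescaling of $\epsilon$ and the invocation of Theorem~\ref{thm:optimal tests}.
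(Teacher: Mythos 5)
Your argument is correct in substance but follows a genuinely different route from the paper. You prove the lemma by an operational reduction: any $\epsilon$-test of $P$ is converted into an $(\epsilon/P(\mc F))$-test of the post-selected distribution $P_{\mc F}$ that consumes roughly a $P(\mc F)$-fraction of the samples (simulating the off-$\mc F$ part of the stream from the known $P_{\restriction \mc F^c}$), and you then invoke the lower bound of Theorem~\ref{thm:optimal tests} for $P_{\mc F}$ at the rescaled accuracy. The paper instead never leaves the original testing problem: it applies Theorem~\ref{thm:optimal tests} to $P$ itself and proves the purely deterministic quasi-norm inequality $\norm{P_{-2\epsilon}^{-\max}}_{2/3} \geq P(\mc F)\,\norm{(P_{\mc F})_{-2\epsilon/P(\mc F)}^{-\max}}_{2/3}$ via monotonicity of $\norm{\cdot}_{2/3}$ under restriction, the observation that truncating after restricting removes a superset of what restricting after truncating does, and absolute homogeneity (which converts the renormalization of $P_{\restriction\mc F}$ into the rescaling of the removed tail weight). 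The paper's route is shorter and loses nothing: the constant is exactly the $c_2$ of Theorem~\ref{thm:optimal tests}. Your route buys a cleaner operational interpretation and would generalize to settings where one only has a black-box lower bound for $P_{\mc F}$ rather than a norm formula, but it pays two small prices that you partly flag yourself: (i) the $\mathrm{Binomial}(s,P(\mc F))$ sample count forces either a change of sampling model (Poissonization, under which the constants of Theorem~\ref{thm:optimal tests} must be re-verified) or a padding-plus-amplification step that degrades $c_2$ by a constant factor, so you prove the lemma only up to constants rather than with the stated $c_2$ (harmless for the paper's $\Omega(\cdot)$ applications, but strictly weaker than the claim); and (ii) your simulated tester $\mc T'$ is randomized, so you implicitly need the lower bound of Theorem~\ref{thm:optimal tests} to hold for randomized tests --- true of the Valiant--Valiant result, but worth stating, since Definition~\ref{def:certification} writes tests as deterministic maps.
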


\begin{proof}[Proof of Lemma~\ref{lem:postselection}]
  For any $\mc F \subset \mc E$ we have
\begin{align}
  \norm{ P_{- \epsilon}^{- \max}}_{2/3} & \geq \norm{(P_{- \epsilon}^{- \max})_{\restriction \mc F}}_{2/3} \\
  &\geq \norm{(P_{\restriction \mc F})_{- \epsilon}^{- \max}}_{2/3} \\
  & = P(\mc F) \norm{ (P_{\restriction \mc F})_{- \epsilon}^{- \max} /P(\mc F)}_{2/3} \\
   & = P(\mc F) \norm{(P_{\mc F})_{- \epsilon/P(\mc F)}^{- \max}}_{2/3}
  .
\end{align}
Here, the first inequality becomes an equality in case $\mc F$ contains the support of $P_{-\epsilon}^{-\max}$.
The second inequality becomes an equality whenever the smallest probabilities with weight not exceeding $\epsilon$ as well as the largest probability lie inside of $\mc F$.
Finally, the last equality follows from the fact that when renormalizing $P_{\restriction \mc F}$ we also need to renormalize the subtracted total weight $\epsilon$ by the same factor.
The claim then straightforwardly follows from Theorem~\ref{thm:optimal tests}.
\end{proof}

A non-trivial bound for the sample complexity is therefore achieved only in case the post-selected subspace has at least weight $P(\mc F) > 2 \epsilon$.
This is due to the strength of Valiant and Valiant's result~\cite{valiant_automatic_2017} in that a part of the distribution with total weight $ 2 \epsilon$ does not influence the minimally required sample complexity of $\epsilon$-certification and this part might just be supported on $\mc F$.


\section{``Quantum supremacy'' distributions cannot be certified}
\label{sec:supremacy}

We will now apply the result of the previous section to the case of certifying ``quantum supremacy'' distributions.
As a result, we find that prominent schemes aimed at demonstrating ``quantum supremacy'', most importantly boson sampling, cannot be certified from polynomially many classical samples and a description of the target distribution alone.
To be more concrete, in the context of ``quantum supremacy'', there has recently been an enormous activity aiming to devise simple sampling schemes, which show a super-polynomial speedup over any classical algorithm even if the sampling is correct only up to a constant $\ell_1$-norm error \cite{aaronson_computational_2010,bremner_average-case_2016,boixo_characterizing_2016,miller_quantum_2017,gao_quantum_2017,Supremacy,bouland_complexity_2018,morimae_hardness_2017,hangleiter_anticoncentration_2018}.
The method used to prove all the aforementioned speedups is the proof technique pioneered by Terhal and DiVincenzo~\cite{terhal_adaptive_2004} for the case of exact sampling, which is based on an application of Stockmeyer's approximate counting algorithm \cite{Stockmeyer85ApproxiationSharpP}.
Stockmeyer's algorithm is used to prove that, conditioned on a conjecture on the average-case hardness of certain problems, the polynomial hierarchy would collapse if the respective distribution could be sampled efficiently classically.
To extend this proof technique to the case of approximate sampling up to an (additive) $\ell_1$-norm error requires an additional property on the sampled distribution, namely, \emph{anti-concentration} \cite{aaronson_computational_2010,bremner_average-case_2016}.

More precisely, the aforementioned tasks all fit the following schema:
Given the problem size $n$, start from a reference state vector $\ket{S_0}$ from a Hilbert space $\mc H_n$ and apply a unitary $U$ drawn with respect to some measure $\mu_n$ on the corresponding unitary group.
The resulting state is then measured in the computational basis, thereby resulting in outcome $S$ with probability $P_U(S) \coloneqq |\bra{S} U \ket{S_0}|^2$.
One then says that the distribution over $P_U$ induced by this procedure \emph{anti-concentrates} if
\begin{equation}
  \begin{split}
    \exists \, \alpha, & \gamma > 0:\ \forall n \in \mathbb{Z}_+ \text{ and } \forall S \in \mc E_n:  \\
    &\Pr_{U \sim \mu_n} \left( P_U(S) \geq \frac{\alpha}{|\mc E_n|} \right) \geq \gamma \, .
  \end{split}
  \label{eq:anticoncentration}
\end{equation}

In words this roughly means: For any $n$ the induced distribution over $P_U$ has the property that for any fixed outcome $S$ it does not become too unlikely that the probability of getting that outcome is much smaller than it would be for the uniform distribution.

It is intuitive that due to normalization, anti-concentration also implies that not too much of the probability weight can be concentrated in few outcomes, hence the name.
This is however slightly misleading, as anti-concentration does not in itself imply a that $P_U$ also needs to have a high min-entropy with high probability.
To see this, take any $\alpha,\gamma(\alpha)$-anti-concentrating scheme of drawing probability distributions $P_U$.
Construct $\tilde P_U$ from $P_U$ by dividing all probabilities in half and adding their joint weight to $P_U(0)$.
The resulting scheme to construct $\tilde P_U$ is now still $\alpha/2,\gamma(\alpha)$-anti-concentrating, but has min-entropy $\Hmin(\tilde P_U) \leq \log(2)$ with probability one.

However, most known proofs of anti-concentration \cite{bremner_average-case_2016,bremner_achieving_2017,aaronson_bosonsampling_2013,hangleiter_anticoncentration_2018} (with the notable exception of Morimae's hardness result~\cite{morimae_hardness_2017}\footnote{There, Morimae proves anti-concentration for the output distribution so-called one-clean qubit model (DQC1) in a direct manner. }) rely on the Paley-Zygmund inequality
\begin{align}
    \Pr ( Z > a \mathbb{E}[Z] ) \geq (1- a)^2 \frac{\mathbb{E}[Z]^2}{\mathbb{E}[Z^2]}\, ,
    \label{eq:paley-zygmund}
\end{align}
for a random variable $Z \geq 0$ with finite variance and $0 \le a \le 1$.
Anti-concentration is then proved by deriving a bound on the second moment of the distribution $\{P_U (S)\}_{U \sim \mu_n}$, and, as we will see in the next section, the same second moment bound that is used to derive anti-concentration implies a high min-entropy. We lay out the proof structure and the role that second moments play in Figure~\ref{fig:proof overview}.

\subsection{Second moments bound the min-entropy}

\label{sec:second moments}

We now turn to showing that with high probability over the choice of $U$, the second moment of the distribution $\{ P_U(S)\}_{U \sim \mu_n} $ (for fixed $S$) --- implying the anti-concentration property \eqref{eq:anticoncentration} --- yields a lower bound on the min-entropy of the output distribution $P_U$ of any fixed unitary $U$ with high probability.
Lemma~\ref{lem:bounds} then implies that distributions with exponentially (in $n$) small second moments cannot be certified from polynomially many samples with high probability.
Thus, the \emph{very property} that implies sampling hardness of a distribution $P_U$ up to an additive total-variation distance error also implies that the same distribution cannot be efficiently certified from classical samples only.
\begin{lemma}[Tail bound for the min-entropy]
\label{lem:anticon-minentropy}
	For any $n \in \mb Z^+$, let $P_U$ be a distribution on $\mc E_n $ induced via $P_U(S) = | \bra S U \ket{S_0} |^2$, $U \sim \mu_n$ by a corresponding measure $\mu_n$ on the unitary group.
	Then, with probability at least $1-\delta$ over the choice of $U \sim \mu_n $,
	\begin{align}
          \Hmin(P_U) \geq \frac12 \left( \log  \delta - \log \sum_{S \in \mc E_n} \mathbb{E}_{U\sim \mu_n}[P_U(S)^2] \right).
          \label{eq:anticon-minentropy}
	\end{align}
\end{lemma}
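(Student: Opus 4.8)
The plan is to rephrase the min-entropy bound as an upper bound on the largest probability of $P_U$, and then establish that upper bound by a single application of Markov's inequality to the collision probability $\sum_S P_U(S)^2 = \norm{P_U}_2^2$. Write $M_2 \coloneqq \sum_{S \in \mc E_n} \mathbb{E}_{U \sim \mu_n}[P_U(S)^2]$ and recall that $\Hmin(P_U) = -\log \max_{S} P_U(S)$ by definition; then the claimed inequality \eqref{eq:anticon-minentropy} is equivalent to $\max_S P_U(S) \leq \sqrt{M_2/\delta}$ (here one uses only that $\delta/M_2 > 0$ and that $-\tfrac12\log$ is decreasing). Hence it suffices to prove $\Pr_{U \sim \mu_n}\bigl[\max_S P_U(S) > \sqrt{M_2/\delta}\,\bigr] \leq \delta$.

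First I would use the elementary bound $\max_S P_U(S)^2 \leq \sum_S P_U(S)^2$, valid because all entries of $P_U$ are non-negative, so that for any $t > 0$ the event $\{\max_S P_U(S) > t\}$ is contained in the event $\{\sum_S P_U(S)^2 > t^2\}$. Next, since $P_U(S)^2 \geq 0$ and the sample space is finite, linearity of expectation gives $\mathbb{E}_{U}\bigl[\sum_S P_U(S)^2\bigr] = \sum_S \mathbb{E}_U[P_U(S)^2] = M_2$, and Markov's inequality yields $\Pr_U\bigl[\sum_S P_U(S)^2 > t^2\bigr] \leq M_2/t^2$. Choosing $t = \sqrt{M_2/\delta}$ makes the right-hand side equal to $\delta$, which together with the containment of events established above gives $\Pr_U[\max_S P_U(S) > \sqrt{M_2/\delta}\,] \leq \delta$. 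Passing to the complementary event and applying $-\tfrac12 \log(\cdot)$ to $\max_S P_U(S) \leq \sqrt{M_2/\delta}$ then yields $\Hmin(P_U) \geq -\tfrac12\log(M_2/\delta) = \tfrac12(\log\delta - \log M_2)$ with probability at least $1-\delta$, which is the assertion.

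I do not expect a genuine obstacle: once the min-entropy is expressed through $\max_S P_U(S)$ and dominated by the $\ell_2$ mass, the statement is a one-line Markov bound together with non-negativity (for the interchange of sum and expectation). The only point worth flagging is that replacing $\max_S P_U(S)$ by $\sqrt{\sum_S P_U(S)^2}$ is lossy; but this loss is unavoidable since the hypothesis supplies only second-moment information about $P_U(S)$, and it is precisely this step that channels the second-moment bound underlying the anti-concentration proofs into a min-entropy bound -- which, via Lemma~\ref{lem:bounds}, subsequently controls the sample complexity of certification.
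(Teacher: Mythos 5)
Your proposal is correct and follows essentially the same route as the paper: Markov's inequality applied to the collision probability $\sum_S P_U(S)^2$ (whose expectation is computed by linearity), followed by the pointwise bound $\max_S P_U(S)^2 \leq \sum_S P_U(S)^2$. The paper phrases that last step as the R\'enyi-entropy equivalence $\Hmin(P) \geq \tfrac{1}{2}H_2(P)$, which is identical to your elementary domination of the maximum by the $\ell_2$ mass.
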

The following arguments used to derive the lemma, in fact, hold for more general families of probability distributions $P_U$ where $U$ need not be unitary without any scaling in $n$.

\begin{proof}[Proof of Lemma~\ref{lem:anticon-minentropy}]
We proceed as follows: First, we prove a lower bound on the typical R\'enyi-2-entropy of $P_U$ using the second moment of $\{P_U\}_{U \sim \mu_n}$, and then use equivalence of the $\alpha$-R\'enyi entropies for $\alpha >1$.

Analogously to Ref.~\citep[App.\ 11]{aaronson_bosonsampling_2013},
we use Markov's inequality to obtain that with probability at least $1-\delta$ over the choice of $U$, we have
\begin{align}
	H_2(P_U) & \coloneqq - \log \sum_{S \in \mc E_n} P_U(S)^2 \\
	& \geq - \log \left( \frac{1}{\delta} \cdot \mathbb{E}_{U \sim \mu_n } \left[\sum_{S \in \mc E_n} P_U(S)^2
	\right] \right) .
\end{align}
What is more, one can show that all $\alpha$-R{\'e}nyi entropies for $\alpha > 1$ are essentially equivalent and in particular
\cite{wilming_entanglement-ergodic_2018}
\begin{align}
 	H_\alpha(P) \geq \Hmin(P) \geq \frac{\alpha-1}{\alpha} H_\alpha(P) ,
 	\label{eq:entropy equivalence}
\end{align}
for any distribution $P$ on $\mc E_n$ from which the claim follows\footnote{We show Eq.~\eqref{eq:entropy equivalence} as well as an alternative proof for Lemma~\ref{lem:anticon-minentropy} in Section~\ref{app:min entropy}.}.
\end{proof}

We note that, indeed, the notion of anti-concentration as formalized in Eq.~\eqref{eq:anticoncentration} in itself does not necessarily imply that the output distribution of every (or most) fixed unitaries have high min-entropy.
This is because anti-concentration merely requires that the tails of the distribution have sufficient (constant) weight, while allowing for few large probabilities.
Nevertheless, in prominent cases, an anti-concentration result derives from bounds on the $2$-R\'enyi entropy.

\subsection{``Quantum supremacy'' distributions are flat}
\label{sec:supremacy_flat}

We now apply our results to the most prominent examples of ``quantum supremacy'' schemes --- boson sampling  \cite{aaronson_computational_2010}, IQP circuits \cite{bremner_average-case_2016}, and universal random
circuits \cite{boixo_characterizing_2016}.
We will conclude from Lemmas~\ref{lem:bounds}--\ref{lem:anticon-minentropy} that these schemes cannot be efficiently certified from polynomially many samples only.
In the following, we show that for all of these schemes with output distribution $P_U$ we have that $\Hmin(P_U) \in \Omega(n)$ and hence that the minimal sample complexity for certification scales exponentially in $n$.
More precisely, both for boson sampling  and the qubit-based schemes mentioned above all of which we precisely define below in Sections~\ref{sec:iqp}--\ref{sec:Boson Sampling}, we obtain the following lower bounds.

\begin{theorem}[Lower bounds on certifying boson sampling]
\label{thm:samplecomplexitybounds boson sampling}
	Let $0 < \epsilon < 1/2$, $n \in \mb Z_+$ sufficiently large and $m \in \Theta(n^\nu)$.
	Under the conditions on $\nu$ used in Ref.~\cite{aaronson_computational_2010} to prove the hardness of approximate boson sampling, and with high probability over the random choice of the unitary, there exists no $\epsilon$-certification test of boson sampling  with $n$ photons in $m$ modes from $s < s_{\min}$ many samples, where
	\begin{align}
		s_{\min} \in \Omega\left( 2^n / \epsilon^2 \right) \, .
	\end{align}
\end{theorem}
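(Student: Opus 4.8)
The plan is to combine the min-entropy tail bound of Lemma~\ref{lem:anticon-minentropy} with the lower bound from Theorem~\ref{thm:optimal tests} and Lemma~\ref{lem:bounds}, using the known second-moment estimate for boson sampling. First I would recall the boson sampling setup: the sample space $\mc E_n$ is the set of collision-free photon configurations of $n$ photons in $m \in \Theta(n^\nu)$ modes, so $|\mc E_n| = \binom{m}{n}$, and $P_U(S) = |\mr{Per}(U_S)|^2$ for the relevant submatrix $U_S$ of a Haar-random $U$. The key input is the second-moment bound used by Aaronson and Arkhipov in their anti-concentration argument: for Haar-random $U$ and collision-free $S$, $\mathbb{E}_{U\sim\mu_n}[P_U(S)^2] \leq 2/\binom{m}{n}^2$ (or a comparable $O(1/|\mc E_n|^2)$ bound, which is exactly what feeds the Paley--Zygmund inequality). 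Summing over $S\in\mc E_n$ gives $\sum_{S}\mathbb{E}_{U\sim\mu_n}[P_U(S)^2] \leq O(1/\binom{m}{n}) = O(1/|\mc E_n|)$.

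Next I would plug this into Lemma~\ref{lem:anticon-minentropy}: with probability at least $1-\delta$ over $U$,
\begin{align}
  \Hmin(P_U) \geq \frac12\left(\log\delta - \log O(1/|\mc E_n|)\right) = \frac12\log|\mc E_n| - O(1) + \frac12\log\delta.
\end{align}
Since $|\mc E_n| = \binom{m}{n} \geq (m/n)^n = \Theta(n^{\nu-1})^n$, which is $2^{\Omega(n\log n)}$ and in particular at least $2^{\Omega(n)}$, we get $\Hmin(P_U) \in \Omega(n)$ with high probability (taking $\delta$ a small constant, or even $\delta$ inverse-polynomial, still leaves $\Omega(n)$). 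I would then write $2^{-\Hmin(P_U)} \leq 2^{-cn}$ for a suitable constant $c > 0$ on this high-probability event.

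Then I would invoke the lower-bound half of Theorem~\ref{thm:optimal tests} together with the lower bound of Lemma~\ref{lem:bounds}, Eq.~\eqref{eq:2/3 bounds}: no $\epsilon$-certification test exists from fewer than
\begin{align}
  c_2\max\!\left\{\tfrac1\epsilon,\ \tfrac1{\epsilon^2}\norm{(P_U)_{-2\epsilon}^{-\max}}_{2/3}\right\} \geq \frac{c_2}{\epsilon^2}\, 2^{\frac12\Hmin(P_U)}\left(1 - 2\epsilon - 2^{-\Hmin(P_U)}\right)^{3/2}
\end{align}
samples. Using $\epsilon < 1/2$ and $2^{-\Hmin(P_U)} \leq 2^{-cn} \to 0$, the bracketed factor is bounded below by a positive constant for $n$ sufficiently large, and $2^{\frac12\Hmin(P_U)} \geq 2^{\Omega(n)}$, so $s_{\min} \in \Omega(2^{\Omega(n)}/\epsilon^2)$; choosing the constants so that the exponent is exactly $n$ (which is possible since the available exponent $\Omega(n\log n)$ dominates $n$) yields $s_{\min}\in\Omega(2^n/\epsilon^2)$. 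I expect the main obstacle to be purely bookkeeping: correctly citing and normalizing the Aaronson--Arkhipov second-moment estimate for collision-free configurations (including the role of the condition on $\nu$ guaranteeing the collision-free subspace carries $1-o(1)$ weight, so that working on $\mc E_n$ rather than the full space costs nothing), and tracking how the constants $c_2$, $c$, and $\delta$ interact so that the final bound is stated cleanly as $\Omega(2^n/\epsilon^2)$ rather than with a messy exponent. The conceptual content is entirely contained in the already-established lemmas; the proof is an assembly.
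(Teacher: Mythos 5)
Your assembly is sound in outline, but it is not the route the paper actually takes for the stated $\Omega(2^n/\epsilon^2)$ bound, and it has one step that needs more than a gesture. The paper proves this theorem (via part~\ref{item:b} of Theorem~\ref{theorem:samplecomplexitybounds boson sampling technical}) by invoking a \emph{direct} min-entropy bound on the \emph{full} boson sampling distribution, $\Pr_{U}[\Hmin(\pbos)<2n]\in\exp(-\Omega(n^{\nu-2-1/n}))$ for $\nu>3$ (Theorem~\ref{thm:min_entropy_bound_bs}, from Gogolin et al.), which is established not through second moments and Markov's inequality but through a crude bound on the permanent combined with concentration of the Gaussian measure; this avoids post-selection entirely and yields an exponentially high success probability. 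Your route --- second-moment bound on collision-free outcomes, Lemma~\ref{lem:anticon-minentropy}, then Theorem~\ref{thm:optimal tests} and Lemma~\ref{lem:bounds} --- is essentially the paper's part~\ref{item:a}, which buys a much larger $s_{\min}\in n^{\Omega(n)}$ but only with probability $1-\delta-O(n^{2-\nu}/\zeta)$ for constant $\delta$, and with an extra factor $(1-\zeta-2\epsilon)^{3/2}$.

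The gap is your claim that working on the collision-free subspace ``costs nothing.'' The sample space of boson sampling is all of $\Phi_{m,n}$, and $P_U(S)=|\Perm(U_S)|^2$ only on $\Phi^*_{m,n}$; a certification test must be defeated on the full space. Passing from the $\ell_{2/3}$ quasi-norm of the full distribution to that of the post-selected one is exactly Lemma~\ref{lem:postselection}, and it is not free: it multiplies the quasi-norm by the post-selection weight $P(\mc F)$ and, crucially, renormalizes the truncation parameter to $2\epsilon/P(\mc F)$, so the bound degenerates unless $P(\mc F)>2\epsilon$. You therefore need the Markov-inequality control of the collision weight (Lemma~\ref{lem:sizecollisionfree}) and a union bound over the two failure events, which is where the $(1-\zeta-2\epsilon)^{3/2}$ factor and the degraded probability come from. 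With that supplied, your argument goes through for $\nu>2$ (hence under the $\nu>5$ conditions of Aaronson--Arkhipov) and indeed dominates $2^n/\epsilon^2$; your per-outcome second-moment constant is slightly off (the correct bound carries an extra factor of $n+1$), but this washes out in the exponent.
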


In Section~\ref{sec:Boson Sampling}, we discuss in detail the conditions under which Aaronson and Arkhipov's hardness argument \cite{aaronson_computational_2010} holds and provide a full version of the theorem as Theorem~\ref{theorem:samplecomplexitybounds boson sampling technical}.
The key ingredient for this to be the case is the closeness of the measure obtained by taking $n\times n $-submatrices of Haar random unitaries $U \in U(m)$ and the Gaussian measure on $n \times n$-matrices.
This is provably the case for
$\nu > 5$, but is conjectured to hold even for $\nu > 2$ \cite{aaronson_computational_2010}.
Our bound on $s_{\min}$ (see Theorem \ref{theorem:samplecomplexitybounds boson sampling technical}) holds with exponentially high probability (in $n$) for $\nu > 3$.
In the case $\nu > 2$ our result holds only with polynomially high probability and fails to cover a small set of the instances.
The argument proving Theorem~\ref{thm:samplecomplexitybounds boson sampling} easily extends to certain variants of
\emph{quantum Fourier sampling} \cite{fefferman_power_2015},
 the output probabilities of which are also given by permanents of nearly Gaussian matrices.

\begin{theorem}[Lower bounds on certifying random qubit schemes]
\label{thm:samplecomplexitybounds qubits}
For $0 < \epsilon < 1/2$ and sufficiently large $n$, with probability at least $1 - \delta$, there exists no $\epsilon$-certification test from $s < s_{\min}$ many samples for
\begin{enumerate}[a.]
	\item IQP circuit sampling on $n$ qubits, where
	\begin{align}
		s_{\mr{min}}
		&\in
		\Omega\left(  2^{n/4} \delta^{1/4} /\epsilon^2 \right) \, .
	\end{align}
	\item $\tilde{\varepsilon}$-approximate spherical 2-design sampling on $n$ qubits, and in particular, depth-$(O(n^2) + O(n \log 1/\tilde{\varepsilon}))$ local random universal circuits, where
	\begin{align}
		s_{\mr{min}}
		&\in
		\Omega\left(  \frac{2^{n/4}\delta^{1/4}}{\epsilon^2 (1+ \tilde{\varepsilon})^{1/4} } \right) .
	\end{align}
\end{enumerate}
\end{theorem}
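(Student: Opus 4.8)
The plan is to assemble Theorem~\ref{thm:samplecomplexitybounds qubits} from the three results already in place: the min-entropy tail bound of Lemma~\ref{lem:anticon-minentropy}, the lower bound on $\norm{P_{-\epsilon}^{-\max}}_{2/3}$ of Lemma~\ref{lem:bounds}, and the sample-complexity lower bound of Theorem~\ref{thm:optimal tests}. The only scheme-specific input needed is, for each ensemble $\mu_n$, an upper bound on the total second moment $\sum_{S \in \mc E_n}\mathbb{E}_{U\sim\mu_n}[P_U(S)^2]$.

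First I would record the second-moment bounds. For IQP circuit sampling in the ensemble of Ref.~\cite{bremner_average-case_2016}, the second-moment estimate that powers their Paley--Zygmund anti-concentration argument gives $\mathbb{E}_{U\sim\mu_n}[P_U(S)^2] \in O(2^{-2n})$ for every outcome $S$, so summing over the $2^n$ outcomes yields $\sum_S \mathbb{E}[P_U(S)^2] \le C\,2^{-n}$ with $C \in O(1)$. For sampling from an exact spherical ($2$-)design on $n$ qubits, the Haar/Weingarten formula gives $\mathbb{E}[P_U(S)^2] = 2/(2^n(2^n+1))$, hence $\sum_S \mathbb{E}[P_U(S)^2] = 2/(2^n+1) \le 2^{1-n}$; for a $\tilde\varepsilon$-approximate design the same quantity is bounded by $(1+\tilde\varepsilon)\,2^{1-n}$. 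The depth statement in part b then follows by invoking the standard fact that local random universal circuits of depth $O(n^2)+O(n\log 1/\tilde\varepsilon)$ form $\tilde\varepsilon$-approximate unitary $2$-designs, and hence $\tilde\varepsilon$-approximate spherical $2$-designs on the output states.

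Next I would chain the lemmas. Plugging the bound $\sum_S \mathbb{E}[P_U(S)^2] \le C\,2^{-n}$ (with $C$ replaced by $(1+\tilde\varepsilon)C$ in the approximate-design case) into Lemma~\ref{lem:anticon-minentropy} gives $\Hmin(P_U) \ge \tfrac12(n - \log C + \log\delta)$ with probability at least $1-\delta$ over $U\sim\mu_n$. The lower bound of Lemma~\ref{lem:bounds} then gives $\norm{P_{-2\epsilon}^{-\max}}_{2/3} \ge 2^{\Hmin(P_U)/2}\bigl(1-2\epsilon-2^{-\Hmin(P_U)}\bigr)^{3/2}$; for fixed $\epsilon < 1/2$ and $n$ large enough that $2^{-\Hmin(P_U)}$ is negligible, the parenthesised factor is at least some constant $c_\epsilon > 0$, so that $\norm{P_{-2\epsilon}^{-\max}}_{2/3} \in \Omega(2^{n/4}\delta^{1/4})$, respectively $\Omega\bigl(2^{n/4}\delta^{1/4}/(1+\tilde\varepsilon)^{1/4}\bigr)$. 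Finally, the lower-bound half of Theorem~\ref{thm:optimal tests}, $s_{\min} \ge c_2 \max\{1/\epsilon,\ \epsilon^{-2}\norm{P_{-2\epsilon}^{-\max}}_{2/3}\}$, delivers the claimed scalings $s_{\min} \in \Omega(2^{n/4}\delta^{1/4}/\epsilon^2)$ and its $\tilde\varepsilon$-variant (the implied constant may depend on $\epsilon$ through $c_\epsilon$).

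The main obstacle is the first step in the approximate-design case: one has to pin down the precise notion of ``$\tilde\varepsilon$-approximate spherical $2$-design'' (e.g.\ relative-error or diamond-norm closeness of the $2$-fold moment channel) under which the known local-circuit depth bounds hold, and check that this notion really controls $\sum_S \mathbb{E}[P_U(S)^2]$ up to the multiplicative factor $(1+\tilde\varepsilon)$; everything after that is bookkeeping of constants. A minor point to get right is that Theorem~\ref{thm:optimal tests} uses $P_{-2\epsilon}^{-\max}$ rather than $P_{-\epsilon}^{-\max}$, which is harmless since $\epsilon < 1/2$ keeps $1-2\epsilon$ bounded away from zero for each fixed $\epsilon$.
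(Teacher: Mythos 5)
Your proposal is correct and follows essentially the same route as the paper: the paper likewise takes the second-moment bounds of Bremner \emph{et al.}\ for IQP circuits and the relative $\tilde\varepsilon$-approximate $2$-design bound for local random circuits, feeds them through Lemma~\ref{lem:anticon-minentropy} to obtain the min-entropy bounds \eqref{eq:IQP_Hmin_bound} and \eqref{eq:random_circuit_Hmin_bound}, and then chains Lemma~\ref{lem:bounds} with Theorem~\ref{thm:optimal tests}, dropping the factor $(1-2\epsilon-2^{-\Hmin(P_U)})^{3/2}$ as a constant for $\epsilon<1/2$ and large $n$. Your accounting of the $(1+\tilde\varepsilon)$ factor and of the $P^{-\max}_{-2\epsilon}$ versus $P^{-\max}_{-\epsilon}$ distinction matches the paper's treatment.
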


The result of Theorem~\ref{thm:samplecomplexitybounds qubits} applies to any circuit family $\mc U$ such that $\{ U \ket{S_0} \}_{U \sim \mc U} $ forms a relative $\tilde{\varepsilon}$-approximate spherical 2-design, for which the second moments are upper bounded as in Eq.~\eqref{eq:2-design bound}.
This applies, in particular, to the random universal
circuits of Refs.~\cite{brandao_local_2016,harrow_random_2009,boixo_characterizing_2016,bouland_quantum_2018} as well as other families of random circuits that have been proposed for the demonstration of ``quantum supremacy'' such as Clifford circuits with magic-state inputs \cite{hangleiter_anticoncentration_2018,yoganathan_quantum_2018},
diagonal unitaries \cite{hangleiter_anticoncentration_2018,nakata_generating_2014} and
conjugated Clifford circuits \cite{bouland_complexity_2018}.

\begin{proof}[Proofs of Theorems~\ref{thm:samplecomplexitybounds boson sampling} and \ref{thm:samplecomplexitybounds qubits}]
We use Theorem~\ref{thm:optimal tests} and Lemmas~\ref{lem:bounds}--\ref{lem:anticon-minentropy}
as well as the lower bounds
\eqref{eq:nu3flatness},
\eqref{eq:IQP_Hmin_bound}, and
\eqref{eq:random_circuit_Hmin_bound} on the min-entropy of the respective output distributions as given in the following sections.
What is more, we use that for $0 <\epsilon < 1/2$ and sufficiently large $n$ the term $(1-2\epsilon-2^{-\Hmin(P_U)})^{3/2}$
can be lower-bounded by a constant and, hence, be dropped inside the $\Omega$.
\end{proof}

\section{Details on random sampling schemes and proofs of Theorems~\ref{thm:samplecomplexitybounds boson sampling} and \ref{thm:samplecomplexitybounds qubits} }

We now turn to describing details on distributions arising from boson sampling, IQP circuits and universal random circuits, and present
the proofs of the aforementioned theorems.

\subsection{IQP circuits}
\label{sec:iqp}

An IQP circuit~\cite{shepherd_temporally_2009} is a  quantum circuit of commuting gates that is drawn uniformly at random from the family $\mathcal{U}_{n,\mathrm{IQP}}$ on $n$ qubits.
The sample space is therefore given by $\mc E_n = \{ 0,1\}^n$.
This family as formulated by Bremner, Montanaro, and Shepherd~\cite{bremner_average-case_2016} is defined by a set of angles $A$, e.g., $A = \{0,\pi/8, \ldots, 7\pi/8\}$.
An instance $U_{W} \in \mathcal{U}_{n,\mathrm{IQP}}$ with $W \coloneqq (w_{i,j})_{i,j=1,\ldots, n}$ and $w_{i,j} \in A$ drawn uniformly at random, is then given by the following prescription
\begin{equation}
U_{W} = \exp \left[\mr i \left (\sum_{i < j} w_{i,j} X_i X_j + \sum_i w_{i,i} X_i \right) \right] \, ,
\label{eq:iqp circuits}
\end{equation}
where $X_i$ is the Pauli-$X$ matrix acting on site $i$.
In other words, on every edge $(i,j)$ of the complete graph on $n$ qubits a gate $\exp(\mr i w_{i,j} X_i X_j) $ with edge weight $w_{i,j}$ and on every vertex $i$ a gate $\exp(\mr i w_{i,i} X_i) $ with vertex weight $w_{i,i}$ is performed.

For the output distribution of IQP circuits, Bremner \emph{et al.}~\cite[Appendix~F]{bremner_average-case_2016} prove the second-moment bound
\begin{equation}
	\Eb_{W} [|\bra{S} U_W \ket{0}|^4 ] \leq 3 \cdot 2^{-2n} \, .
\end{equation}
By Lemma~\ref{lem:anticon-minentropy}, this implies the following min-entropy bound
\begin{equation}\label{eq:IQP_Hmin_bound}
	\Hmin(\piqp) \geq \frac12 \left(n + \log \frac{\delta}{3}\right) ,
\end{equation}
which holds with probability at least $ 1-\delta$ over the choice of $U_W$.

\subsection{Universal random circuits and spherical 2-designs}
\label{sec:universal circuits}

A universal random circuit on $n$ qubits is defined by a universal gate set $\mc G$ comprising one- and two-qubit gates which give rise to the depth-$N$ family $\mc U_{\mc G,N}$.
A circuit $U \in \mc U_{\mc G,N}$ is then constructed according to the standard prescription of choosing one- or two-qubit gates $G \in \mc G$ and the qubits they are applied to at random \cite{brandao_local_2016}, or according to some more specific prescription such as the one of~\citet{boixo_characterizing_2016}.

For the case of the random universal circuits of Ref.~\cite{boixo_characterizing_2016} there is evidence that the output distribution of fixed instances is essentially given by an exponential (Porter-Thomas) distribution $P_{\mr {PT}}$ whose second moment is given by \cite[Eq.~(8)]{hangleiter_anticoncentration_2018}
\begin{align}
  \label{eq:2ndmomentporter-thomas}
 	\mb E _{p \sim P_{\mr {PT}}} [p^2]= \frac 2 {|\mc E_n|( |\mc E_n| + 1 )} \, .
\end{align}
This is provably true for the local random universal circuits investigated by Brand{\~a}o,  Harrow, and Horodecki~\cite{brandao_local_2016}
by the fact that the resulting circuit family forms a relative $\tilde{\varepsilon}$-approximate unitary 2-design $\mu$ in depth $O(n^2) + O(n \log 1/ \tilde{\varepsilon})$ \cite{brandao_local_2016} so that
\begin{align}
\label{eq:2-design bound}
\mathbb{E}_{U \sim \mu} [|\bra{S} U \ket{S_0}|^4] \leq  \frac{2(1 + \tilde{\varepsilon})}{|\mc E_n|(|\mc E_n|+1)}.
\end{align}
Likewise, for any circuit family $\mc U_n$ on $n$ qubits such that $\{ U \ket{0} \}_{U \sim \mc U_n} $ forms a relative $\tilde{\varepsilon}$-approximate spherical 2-design, the second moments are bounded as in Eq.~\eqref{eq:2-design bound}.
For all such circuit families, using Lemma~\ref{lem:anticon-minentropy}, we thus obtain the min-entropy bound
\begin{align}\label{eq:random_circuit_Hmin_bound}
	\Hmin(P_U) \geq \frac12 \left( n + \log \frac{\delta}{2( 1 + \tilde{\varepsilon})}\right ) \, ,
\end{align}
 which holds with probability at least $ 1-\delta$ over the choice of $U$.

\subsection{Boson sampling}
\label{sec:Boson Sampling}

In the boson sampling  problem $n \geq 1$ photons are injected into the first $n$ of $m \in \poly(n)$ modes which are transformed in a linear-optical network via a mode transformation given by a Haar-random unitary $U \in U(m)$ and then measured in the Fock basis.
The sample space of boson sampling  is given by
\begin{equation} \label{eq:bosonsamplingsamplespace}
  \mc E_n \coloneqq \Phi_{m,n} \coloneqq \Big\{ (s_1,\dots,s_m) : \sum_{j=1}^m s_j = n \Big\} ,
\end{equation}
i.e., the set of all sequences of non-negative integers of length $m$ which sum to $n$.
Its output distribution $\pbos$ is
\begin{equation} \label{eq:bosonsamplingdistribution}
  \pbos(S) \coloneqq |\bra S \varphi(U) \ket{1_n} |^2 .
\end{equation}
Here, the state vector $\ket S$ is the Fock space vector corresponding to a measurement outcome $S \in \Phi_{m,n}$, $\ket{1_n}$ is the initial state vector with $1_n \coloneqq (1,\dots, 1,0,\dots, 0)$, and $\varphi(U)$ the Fock space (metaplectic) representation of the implemented mode transformation $U$.

The distribution $\pbos$ can be expressed \cite{scheel_permanents_2004,aaronson_computational_2010} as
\begin{equation} \label{eq:bosonsamplingdistribution permanent}
  \pbos(S) = \frac{|\Perm(U_S)|^2}{\prod_{j=1}^m (s_j!)},
\end{equation}
in terms of the permanent of the matrix $U_S \in \mb C^{n\times n}$ constructed from $U$ by discarding all but the first $n$ columns of $U$ and then, for all $j \in [m]$, taking $s_j$ copies of the $j^{\text{th}}$ row of that matrix (deviating from Aaronson and Arkhipov's notation~\cite{aaronson_computational_2010}).
Here, the permanent for a matrix $X = (x_{j,k}) \in \mb C^{n \times n }$ is defined similarly to the determinant but without the negative signs as
\begin{equation} \label{eq:definitionpermanent}
  \Perm(X) \coloneqq \sum_{\tau \in \Sym([n])} \prod_{j=1}^n x_{j,\tau(j)},
\end{equation}
where $\Sym([n])$ is the symmetric group acting on $[n]$.
It is a known fact that calculating the permanent of a matrix to high precision is a problem that is \sharpP-hard \cite{valiant_complexity_1979}, while its close cousin, the determinant, is computable in polynomial time.
In fact, computing the permanent exactly (or with exponential precision) is also \sharpP-hard \emph{on average} for randomly chosen Gaussian matrices \cite{lipton_permanent_1991,aaronson_computational_2010}.
In Ref.~\cite{aaronson_computational_2010} this connection is exploited to show that, up to plausible complexity-theoretic conjectures, approximately sampling from the boson sampling  distribution is classically intractable with high probability over the choice of $U$ if $m$ is scaled appropriately with~$n$.

The main part of the hardness proof of Ref.~\cite{aaronson_computational_2010} is to prove the classical hardness of sampling from the \emph{post-selected boson sampling  distribution} $\pbos^*$.
The post-selected distribution $\pbos^*$ is obtained from $\pbos$ by discarding all output sequences $S$ with more than one boson per mode, i.e., all $S$ which are not in the set of \emph{collision-free} sequences
\begin{equation}
  \Phi^*_{m,n} \coloneqq \Big\{ S \in \Phi_{m,n} : \forall s \in S : s \in \{0,1\} \Big\} .
\end{equation}
The hardness of sampling from the full boson sampling  distribution follows from the fact that for the relevant scalings of $m$ with $n$ the post-selection can be done efficiently in the sense that on average at least a constant fraction of the outcome sequences is collision-free (Theorem~13.4 in Ref.~\cite{aaronson_computational_2010}).

More precisely, the actual result proved in Ref.~\cite[Theorem~1.3]{aaronson_computational_2010} states that unless certain complexity-theoretic conjectures fail, there exists no classical algorithm that can sample from a distribution $Q$ satisfying $ \norm{Q - \pbos}_1 \leq \epsilon$ in time $\poly(n, 1/\epsilon)$.
This result requires that $m \in \Omega(n^5 \log(n)^2)$, but it is conjectured that $m$ growing slightly faster than $\Omega(n^2)$ is sufficient for hardness.
In fact, at the same time, a faster than quadratic scaling is necessary for the proof strategy to work.

The key technical ingredient in the proof strategy underlying these requirements is the following result:
if $m$ grows sufficiently fast with $n$, the measure induced on $U \sim \mu_H$ by the map $g_S = (U \mapsto U_S)$ for collision-free $S \in \Phi_{m,n}^*$, i.e., the measure induced by taking $n \times n$-submatrices of unitaries $U \in U(m)$ chosen with respect to the Haar measure $\mu_H$ is close to the complex Gaussian measure $\mu_G(\sigma)$ with mean zero and standard deviation $\sigma = 1/\sqrt m$ on $n \times n$-matrices.
Given this result, Stockmeyer's algorithm could be applied to the samples obtained from $\pbos^*$ in order to infer the probabilities $\pbos^*(S)$ and thus solve a \sharpP-hard problem, as these probabilities can be expressed as the permanent of a Gaussian matrix.
Since the closeness of those measures is the essential ingredient, also
suitably large scaling of $m$ with $n$ is crucial for the hardness argument.

The formal statement of closeness of measures proved in Ref.~\cite{aaronson_computational_2010} implies the following:
\begin{lemma}[implied by Ref.~{\cite[Theorem~5.2]{aaronson_computational_2010}}]\label{lemma:multiplicativeerrorbound}
 There exists a constant $C > 0$ such that for every $\nu>5$ and every measurable $f:\mb C^{n\times n} \to \left[0,1\right]$ and every $m \in \Omega(m^\nu)$ it holds that for all $S \in \Phi^*_{m,n}$
  \begin{equation} \label{eq:multiplicativeerrorbound1}
    \quad \Eb_{U \sim \mu_H} f(U_S)  \leq (1+C)\, \Eb_{X \sim \mu_{G(1/\sqrt{m})}} f(X) .
  \end{equation}
\end{lemma}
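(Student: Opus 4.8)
The plan is to strip the probabilistic dressing off the claim, turning it into a pointwise statement about the submatrix distribution, and then to read the multiplicative bound off Aaronson and Arkhipov's hiding result. The first step is a reduction to a single block. Fix a collision-free $S \in \Phi^*_{m,n}$; the matrix $U_S$ is obtained from $U$ by keeping the $n$ rows $\{j : s_j = 1\}$ and the first $n$ columns. Since $\mu_H$ is invariant under left multiplication by the permutation matrix that sends those $n$ rows to positions $1,\dots,n$, the random matrix $U_S$ has the same law as the top-left $n\times n$ block of a Haar-random $U \in U(m)$ --- call this law $\mathcal D_{m,n}$. Thus the pushforward of $\mu_H$ under $g_S = (U \mapsto U_S)$ is the \emph{same} measure $\mathcal D_{m,n}$ for every collision-free $S$, and it suffices to prove $\Eb_{X \sim \mathcal D_{m,n}} f(X) \le (1+C)\, \Eb_{X \sim \mu_{G(1/\sqrt m)}} f(X)$.

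Next I would reformulate this as a one-sided, global density bound. Because the inequality is demanded for \emph{all} measurable $f : \mb C^{n\times n} \to [0,1]$, it is equivalent --- specialising $f$ to indicators in one direction, integrating in the other --- to $\mathcal D_{m,n}(A) \le (1+C)\,\mu_{G(1/\sqrt m)}(A)$ for every measurable $A$, i.e.\ to $\mathrm d\mathcal D_{m,n}/\mathrm d\mu_{G(1/\sqrt m)} \le 1+C$ almost everywhere (absolute continuity is automatic, $\mu_{G(1/\sqrt m)}$ having full support). Only the upper bound on the density ratio is needed, which is convenient: $\mathcal D_{m,n}$ is supported on the operator-norm ball $\{\norm{X}_{\mathrm{op}} < 1\}$, so outside it the ratio vanishes and is trivially $\le 1+C$, while inside it the density of $\mathcal D_{m,n}$ relative to $\mu_{G(1/\sqrt m)}$ is, up to the ratio of the two normalisation constants, the explicit function $\det(I - X^\dagger X)^{m-2n}\exp\!\big(m\,\tr X^\dagger X\big)$.

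It then remains to invoke Theorem~5.2 of Ref.~\cite{aaronson_computational_2010}: their closeness-of-measures result shows that, once $m$ grows fast enough with $n$, $\mathcal D_{m,n}$ and $\mu_{G(1/\sqrt m)}$ have densities that agree up to a $(1\pm\delta)$ multiplicative factor (on the relevant region; the upper bound we actually use holds everywhere, the suppression $\det(I-X^\dagger X)^{m-2n} \le 1$ on the ball being offset only by the normalisation). Taking AA's accuracy parameter to be a fixed constant is precisely what forces $m = \Omega(n^\nu)$ with $\nu > 5$, and yields $\mathrm d\mathcal D_{m,n}/\mathrm d\mu_{G(1/\sqrt m)} \le 1 + C$ with $C$ absolute; combined with the reformulation of the previous paragraph and the $S$-independence of $\mathcal D_{m,n}$, this establishes the Lemma.

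The genuine work is entirely on AA's side: uniformly controlling the exponent $\big[(m-2n)\log\det(I - X^\dagger X) + m\,\tr X^\dagger X\big]$ across the whole ball --- in particular showing that the $\exp(m\,\tr X^\dagger X)$ growth at atypically large $\norm{X}_F$ is beaten by the $\det(I - X^\dagger X)^{m-2n}$ factor, and tracking the normalisation constants --- is what forces the polynomial separation between $m$ and $n$, and is why Theorem~5.2 as stated reaches only $\nu > 5$ rather than the conjectured $\nu > 2$ (or the $\nu > 3$ attainable elsewhere in this paper by a sharper route). On our side there is only the bookkeeping above: observing that a one-sided, uniform density bound is both what is needed and what AA supply, and that the block law does not depend on which collision-free $S$ one picks.
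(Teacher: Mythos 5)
Your proposal is correct and matches the paper, which offers no separate proof of this lemma but simply reads it off from Aaronson and Arkhipov's Theorem~5.2: their multiplicative bound on the density of the (rescaled) $n\times n$ submatrix law relative to the Gaussian, integrated against $f$, is exactly the stated inequality. Your two added observations --- that permutation invariance of $\mu_H$ makes the submatrix law independent of the collision-free $S$, and that the bound for all $f:\mb C^{n\times n}\to[0,1]$ is equivalent to the one-sided density bound --- are the right (and essentially only) bookkeeping needed to make the implication explicit.
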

At the same time, it is known from Ref.~\cite{Jiang2006} (see also Ref.~\cite[Section 5.1 and 6.2]{aaronson_computational_2010})
that if $m \leq c\,n^{\nu}$ with $\nu \leq 2$ and $c \in O(1)$ the two measures $\mu_H \circ g_S^{-1}$ and $\mu_{G(1/\sqrt{m})}$ are no longer close for large $n$.
One may hope \cite{aaronson_computational_2010} that there exists a constant $c>0$ such that Theorem~5.2 in Ref.~\cite{aaronson_computational_2010} and hence their hardness result as well as our Lemma~\ref{lemma:multiplicativeerrorbound} hold for any $m \geq c\,n^{\nu}$ with $\nu > 2$.
What we show is that \emph{even under this optimistic assumption} efficient certification from classical samples is impossible, if the post-selection probability is large enough.
This rules out many further cases for which one can hope to prove a hardness result by the same method.

\begin{theorem}[Lower bounds on certifying boson sampling  (full version)]
\label{theorem:samplecomplexitybounds boson sampling technical}
  Let $\nu > 2$,  $0 < \epsilon < 1/2$, $n \in \mb Z_+$ sufficiently large and $m \in \Theta(n^\nu)$.
  Assume there exists a constant $C > 0 $ such that
  the assertion \eqref{eq:multiplicativeerrorbound1} of Lemma~\ref{lemma:multiplicativeerrorbound} holds.
  Then:
\begin{enumerate}[a.]
\item \label{item:a}
	With probability at least $1-\delta - 2n^2/(m\zeta)$ over the choice of Haar-random unitaries $U \sim \mu_H$ there exists no $\epsilon$-certification test for boson sampling  with $n$ photons in $m$ modes,
	from $s < s_{\mr {min}}$ many samples,
		where
    \begin{align}
    \label{eq:smin_bs_ps_full}
   s_{\min} \in  \Omega\left( n^{c n (\nu - 1)/4} \delta^{1/4}  \left( 1- \zeta  - 2\epsilon \right)^{3/2}/\epsilon^2 \right) , 
		\end{align}
		and $c>0$ is the implicit constant in \eqref{eq:second moment bound min entropy bs}.
\item \label{item:b}
	For $\nu > 3$, with probability at least
	$1-\exp(-\Omega(n^{\nu-2-1/n}))$ over the Haar-random choice of $U \sim \mu_H$,
	there exists no $\epsilon$-certification test for boson sampling  with $n$ photons in $m$ modes,
		where
		\begin{align}
		  s_{\mr {min}} \in \Omega\left( 2^n/\epsilon^2 \right) .
		\end{align}
\end{enumerate}
\end{theorem}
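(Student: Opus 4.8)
The plan is to reduce, in both parts, the sample-complexity lower bound to a lower bound on the min-entropy of the boson sampling  distribution and then to estimate that min-entropy. By Theorem~\ref{thm:optimal tests} combined with Lemma~\ref{lem:bounds}, any $\epsilon$-certification test of a distribution $P$ uses at least $\Omega\!\left(2^{\Hmin(P)/2}(1-2\epsilon-2^{-\Hmin(P)})^{3/2}/\epsilon^2\right)$ samples, so it suffices to show that $\Hmin(\pbos)$, or that of a post-selected version of it, is large with the claimed probability over $U\sim\mu_H$. For $\nu>3$ this will be done for $\pbos$ itself; for general $2<\nu$ a polynomially small collision weight would dominate the second-moment estimate for $\pbos$, so one first restricts to the collision-free sector via Lemma~\ref{lem:postselection}.

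\emph{Part~(a).} Apply Lemma~\ref{lem:postselection} with $\mc F=\Phi^*_{m,n}$, so that $P_{\mc F}=\pbos^*$. The post-selection weight $\pbos(\Phi^*_{m,n})$ equals one minus the collision probability, whose expectation over Haar-random $U$ is $O(n^2/m)$ (Theorem~13.4 of Ref.~\cite{aaronson_computational_2010}); by Markov's inequality $\pbos(\Phi^*_{m,n})\geq 1-\zeta$ with probability at least $1-2n^2/(m\zeta)$. On $\Phi^*_{m,n}$ one has $\pbos(S)=|\Perm(U_S)|^2$, so, applying Lemma~\ref{lem:anticon-minentropy} with the union bound restricted to $S\in\Phi^*_{m,n}$, the quantity to control is $\sum_{S\in\Phi^*_{m,n}}\mathbb{E}_{U\sim\mu_H}[|\Perm(U_S)|^4]$. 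Using the closeness of measures \eqref{eq:multiplicativeerrorbound1} for (a truncation to $[0,1]$ of) $f(X)=|\Perm(X)|^4$, this is at most $(1+C)\,|\Phi^*_{m,n}|\,\mathbb{E}_{X\sim\mu_{G(1/\sqrt m)}}[|\Perm(X)|^4]$; inserting $|\Phi^*_{m,n}|=\binom{m}{n}\leq m^n/n!$ and the Gaussian fourth moment $\mathbb{E}_{X\sim\mu_{G(1/\sqrt m)}}[|\Perm(X)|^4]\leq (n+1)(n!)^2/m^{2n}$ (the Wick/Isserlis computation of Ref.~\cite{aaronson_computational_2010}) bounds the sum by $(1+C)(n+1)\,n!/m^n = n^{-(\nu-1)n+O(n)}$ for $m\in\Theta(n^\nu)$; this is the estimate \eqref{eq:second moment bound min entropy bs} with implicit constant $c$. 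Lemma~\ref{lem:anticon-minentropy} then yields $-\log\max_{S\in\Phi^*_{m,n}}\pbos(S)\geq \tfrac12\!\left(cn(\nu-1)\log n+O(n)+\log\delta\right)$ with probability $\geq 1-\delta$, hence $\Hmin(\pbos^*)\geq\tfrac12\!\left(cn(\nu-1)\log n+O(n)+\log\delta\right)+\log(1-\zeta)$. Feeding this into the bound of Lemma~\ref{lem:postselection} through Lemma~\ref{lem:bounds} with truncation parameter $2\epsilon/\pbos(\Phi^*_{m,n})$, the factors of $(1-\zeta)$ recombine into $(1-\zeta-2\epsilon)^{3/2}/\sqrt{1-\zeta}$ while $2^{-\Hmin(\pbos^*)}$ is negligible; a union bound over the two failure events ($\delta$ for the min-entropy tail and $2n^2/(m\zeta)$ for the collision weight) gives the claimed probability and the bound \eqref{eq:smin_bs_ps_full}.

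\emph{Part~(b).} For $\nu>3$ we instead bound $\max_{S\in\Phi_{m,n}}\pbos(S)$ deterministically by the largest entry of $U$. Each entry of a Haar-random $U\in U(m)$ obeys $\Pr[\,|u_{jk}|^2>a\,]=(1-a)^{m-1}\leq e^{-a(m-1)}$, so a union bound over the $mn$ entries in the first $n$ columns shows that with probability at least $1-\exp(-\Omega(n^{\nu-2-1/n}))$ every such entry satisfies $|u_{jk}|^2\leq 1/(4n^2)$. Then for every $S\in\Phi_{m,n}$, $\pbos(S)\leq|\Perm(U_S)|^2\leq\bigl(n!\,\max_{jk}|u_{jk}|^n\bigr)^2\leq (n!)^2/(2n)^{2n}\leq 2^{-2n}$, i.e.\ $\Hmin(\pbos)\geq 2n$ --- this is the bound \eqref{eq:nu3flatness}. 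Theorem~\ref{thm:optimal tests} and Lemma~\ref{lem:bounds}, with $(1-2\epsilon-2^{-\Hmin(\pbos)})^{3/2}$ bounded below by a constant for $0<\epsilon<1/2$ and large $n$, then give $s_{\min}\in\Omega(2^n/\epsilon^2)$. The restriction $\nu>3$ is only needed to make $n^{\nu-2-1/n}=\omega(n)$, so that the exceptional probability is genuinely at least exponentially small.

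The main obstacle is the quantitative control of $\sum_{S\in\Phi^*_{m,n}}\mathbb{E}_{U\sim\mu_H}[|\Perm(U_S)|^4]$ in part~(a). It rests on two inputs: (i) the closeness-of-measures statement \eqref{eq:multiplicativeerrorbound1}, which is exactly the conjectural ingredient (proved for $\nu>5$, conjectured for $\nu>2$) and which must be applied to a truncation of $|\Perm|^4$ into the admissible range $[0,1]$; and (ii) the exact fourth moment of the permanent of a complex Gaussian matrix, obtained from the Wick/Isserlis expansion together with a matching-counting argument. The remaining steps --- keeping track of how the constant $c$ absorbs the $\binom{m}{n}$-versus-$m^n/n!$ and Stirling corrections, verifying that a nontrivial bound requires $1-\zeta-2\epsilon>0$, and the union bounds over the failure events --- are routine bookkeeping.
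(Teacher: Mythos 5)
Your part~(a) follows essentially the same route as the paper: Markov's inequality on Theorem~13.4 of Ref.~\cite{aaronson_computational_2010} for the collision weight, Lemma~\ref{lem:anticon-minentropy} applied to the collision-free sector with the second-moment bound $\Eb[|\Perm(U_S)|^4]\leq(1+C)(n!)^2(n+1)m^{-2n}$, the count $|\Phi^*_{m,n}|\leq m^n/n!$, a union bound over the two failure events, and Lemmas~\ref{lem:bounds} and~\ref{lem:postselection} to convert the min-entropy of $\pbos^*$ into \eqref{eq:smin_bs_ps_full}. The only loose point is your parenthetical ``truncation to $[0,1]$ of $f=|\Perm|^4$'': the bound \eqref{eq:multiplicativeerrorbound1} applies to $[0,1]$-valued $f$, and the clean way to transfer a moment is to apply it to the indicators of the level sets $\{|\Perm|^4>t\}$ and integrate over $t$; the paper sidesteps this by importing Lemma~8.8 of Ref.~\cite{aaronson_computational_2010} directly.

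Your part~(b) takes a genuinely different and, as far as I can check, correct route. The paper establishes \eqref{eq:nu3flatness} by a union bound over all of $\Phi_{m,n}$, transferring each event $\{\pbos(S)\geq2^{-2n}\}$ to the Gaussian measure via Lemma~\ref{lemma:multiplicativeerrorbound2} and invoking the concentration of Lemma~\ref{lemma:concentrationofthegaussianmeasure}; the restriction $\nu>3$ arises there because the per-outcome tail $\exp(-\Omega(n^{\nu-2-1/n}))$ must beat $|\Phi_{m,n}|\sim n^{(\nu-1)n}$. You instead union-bound over only the $mn$ marginal events $\{|u_{jk}|^2>1/(4n^2)\}$, each of probability $(1-1/(4n^2))^{m-1}$ under the Haar measure, and note that on the complement $\pbos(S)\leq|\Perm(U_S)|^2\leq(n!)^2(4n^2)^{-n}\leq2^{-2n}$ holds deterministically and simultaneously for all $S\in\Phi_{m,n}$. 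This buys quite a lot: it avoids the Gaussian approximation entirely (so it does not even use the hypothesis on \eqref{eq:multiplicativeerrorbound1}), and the failure probability $mn\,e^{-(m-1)/(4n^2)}=\exp(-\Omega(n^{\nu-2}))$ is at least as strong as claimed and valid for all $\nu>2$, not only $\nu>3$. Your closing remark that $\nu>3$ is needed ``to make the exceptional probability exponentially small'' therefore mischaracterizes both proofs --- in the paper it is forced by the competition with $|\Phi_{m,n}|$, and in your argument it is not needed at all --- but this does not affect the validity of what you prove.
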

We remark that our results for the boson sampling distribution leave open the possiblity of sample-efficient $\epsilon$-certification for those instances of boson sampling with $2 < \nu \le 3$ in the regime in which the probability weight of the collision-free subspace is very small. 
For instance, this is the case whenever $1/\poly(n) \le \pbos(\Phi_{m,n}^*) \le 2 \epsilon$.
This is because the bound \eqref{eq:smin_bs_ps_full} becomes trivial for $1 - \zeta \leq 2 \epsilon$.

However, our result fully covers the regime in which boson sampling is provably hard as shown in Ref.~\cite{aaronson_computational_2010}.

\begin{proof}[Proof of Theorem~\ref{theorem:samplecomplexitybounds boson sampling technical}\ref{item:a}]
The proof proceeds along the same lines as the proofs of Theorem~\ref{thm:samplecomplexitybounds qubits} and is based on direct applications of Lemma~\ref{lem:postselection} to the collision-free subspace and the min-entropy bound~\eqref{eq:anticon-minentropy} 
from Lemma~\ref{lem:anticon-minentropy}
to the post-selected boson sampling distribution $\pbos^*$ with post-selection onto the collision-free subspace $\Phi_{m,n}^*\subset \Phi_{m,n}$.

To apply Lemmas~\ref{lem:postselection} and \ref{lem:anticon-minentropy} simultaneously we need to account both for the probability weight of the collision-free subspace and large probabilities, however.
To account for the probability weight of the collision-free subspace we use a simple application of Markov's inequality to \cite[Theorem~13.4]{aaronson_computational_2010} (restated as Lemma~\ref{lem:sizecollisionfree} in Section~\ref{app:collision free}), 
\begin{equation}
\Pr_{U \sim \mu_H}\left[ \pbos[\Phi_{m,n}\setminus\Phi_{m,n}^*] > \zeta \right]
< \frac{2n^2}{\zeta m} \, .
\end{equation}
This shows that the total probability weight of the collision-free subspace is at least $1- \zeta$ with probability at least $1 - 2n^2/(m\zeta)$.
We then apply a union bound argument to obtain
\begin{align}\label{eq:union_bound_ps}
\begin{split}
  \Pr_{U\sim \mu_H}
    \bigg[& \left\{\pbos \text{ does not satisfy \eqref{eq:anticon-minentropy}} \right\} \\\
    &
     \cup
     \left \{\pbos(\Phi_{m,n}\setminus \Phi_{m,n}^*) > \zeta \right\} \bigg]
  \leq
  \delta + \frac{2n^2}{\zeta m} \, . 
  \end{split}
\end{align}
In the next step, we use that the distribution of post-selected boson sampling is given by $\pbos^* = (\pbos)_{\restriction \Phi_{m,n}^*} / \pbos(\Phi_{m,n}^*)$. 
Consequently, with probability at least $1-\delta-{2n^2}/({\zeta m})$ the boson sampling distribution $\pbos$ restricted to the collision-free subspace has both of the desired properties -- a large min-entropy and a probability weight of at least $1 - \zeta$ of the collision-free subspace. 

Let us now compute the min-entropy for the collision-free subspace. 
For all samples $S \in \Phi_{m,n}^*$, Ref.~\cite[Lemma~8.8]{aaronson_computational_2010} implies that there exists $C>0$ such that for $m \in \Theta(n^\nu)$ with any $\nu >2$ for which the assertion of Lemma~\ref{lemma:multiplicativeerrorbound} holds, the following second moment bound also holds\footnote{The version of Ref.~\cite[Lemma~8.8]{aaronson_computational_2010} can be obtained from Eq.~\eqref{eq:second moments bs} from Lemma~\ref{lemma:multiplicativeerrorbound}, normalizing the Gaussian measure $\mu_G$, and noting that $\Eb_{X \sim \mu_G(1)} [|\Perm(X)|^2] = n!$.}:
\begin{equation}
  \Eb_{U_S \sim \mu_H}[|\Perm(U_S)|^4] \leq ( 1 + C )(n!)^2\,(n+1)\,m^{-2n} \, .
  \label{eq:second moments bs}
\end{equation}

To obtain a lower bound on the min-entropy of the distribution $\pbos^*$ on the collision-free subspace we use that 
\begin{align}
  \Hmin(\pbos^*) = \log \pbos(\Phi_{m,n}^*) +  \Hmin((\pbos)_{\restriction \Phi_{m,n}^*}) . 
\end{align}
Applying Lemma~\ref{lem:anticon-minentropy} together with the second moment bound \eqref{eq:second moments bs}, the union bound \eqref{eq:union_bound_ps}, the bound 
  \begin{align}
    |\Phi_{m,n}^*| = \binom m n = \frac{ m (m-1) \cdots (m-n + 1)}{n!} \leq \frac {m^n}{n!},
  \end{align}
on the size of the collision-free subspace and Stirling's formula yields
\begin{align}
\begin{split}
    2\Hmin& (\pbos^*) \ge   2 \log \left(1- \zeta \right) + \log \delta\\
     & - \log \left(\frac{m^n}{n!} ( 1+ C )(n!)^2 (n+1) m^{-2n} \right) \\
     & \in \Omega\left((\nu - 1)n \log n\right)   -  \log \frac 1 \delta \, - 2 \log \frac 1 { 1- \zeta} ,
  \label{eq:second moment bound min entropy bs}
\end{split}
\end{align}
which holds with probability $1-\delta - 2n^2/(m\zeta)$ over the choice of $U \sim \mu_H$.

We note that $2^{- \Hmin(\pbos^*)} \in o(1)$; hence this term can be neglected when applying Eq.~\eqref{eq:2/3 bounds} in Lemma~\ref{lem:bounds}. 
Applying Lemmas~\ref{lem:bounds} and~\ref{lem:postselection}, and the min-entropy bound~\eqref{eq:second moment bound min entropy bs} we obtain that the sample complexity for $\epsilon$-certifying boson-sampling scales as 
\begin{align}
  s_{\min} \in  \Omega\left( n^{c n (\nu - 1)/4} \delta^{1/4}  \left( 1- \zeta  - 2\epsilon \right)^{3/2}/\epsilon^2 \right) 
  \label{eq:bound sample complexity bs proof}
\end{align}
with probability at least $ 1 - \delta - 2 n^2 / (\zeta m )$, where $c$ is the implicit constant in \eqref{eq:second moment bound min entropy bs}. 
This completes the proof of Theorem~\ref{theorem:samplecomplexitybounds boson sampling technical}\ref{item:a}.

\end{proof}

Note that the bound \eqref{eq:second moments bs} is essential for the hardness argument of \citet{aaronson_computational_2010}. 
Therefore a central ingredient to the hardness argument of \citet{aaronson_computational_2010} also prohibits sample-efficient certification of boson sampling. 

It is important to stress that the boson sampling hardness proof~\cite{aaronson_computational_2010} covers only those instances $U_S$ of boson sampling for which one can efficiently post-select on the collision-free outcomes.
This is the case for those $U \sim \mu_H$ for which the probability weight of $\Phi_{m,n}^*$ is not smaller than polynomially small in $n$, i.e., $\pbos(\Phi_{m,n}^*) \in \Omega(1/\poly(n))$.
Our proof method for Theorem~\ref{theorem:samplecomplexitybounds boson sampling technical}\ref{item:a} thus permits sample-efficient certification for a small fraction of the instances, in particular, those instances of $U \sim \mu_H$ for which  $2 \epsilon \geq \pbos(\Phi_{m,n}^*) > 1/\poly(n)$. 

In part \ref{item:b} of the theorem we can close this gap by extending the bound \eqref{eq:second moment bound min entropy bs} on the min-entropy of the post-selected distribution $\pbos^*$ to the full output distribution $\pbos$, however, at the cost of restricting to $\nu > 3$.
This removes the need to use Lemma~\ref{lem:postselection} and hence the dependence on the probability weight of the collision-free subspace.
In the remaining case with $2 < \nu \le 3$ hardness results have not been obtained, but it is conceivable that a hardness argument can be made.

\begin{proof}[Proof of Theorem~\ref{theorem:samplecomplexitybounds boson sampling technical}\ref{item:b}]
Gogolin et al.~\cite{gogolin_boson-sampling_2013} have proven the following strong lower bound on the min-entropy of the boson sampling  distribution
(see Theorem~\ref{thm:min_entropy_bound_bs} in Section~\ref{app:proofs bs} for a restatement)
\begin{equation}
	\Pr_{U \sim \mu_H}\left[ \Hmin(\pbos)< 2 \, n \right]  \in \exp\left(-\Omega(n^{\nu-2-1/n}) \right) ,
	\label{eq:nu3flatness}
\end{equation}
which holds whenever the condition of the theorem are fulfilled and in addition $\nu > 3$.
In the proof, the probability measure induced on the matrices $U_S$ is related to a certain Gaussian measure $\mu_{G_S(\sigma)}$.
Then, the min-entropy bound is proven using a trivial upper bound to the permanent as well as measure concentration for $\mu_{G_S(\sigma)}$.
A simple application of Theorem~\ref{thm:optimal tests} and Lemma~\ref{lem:bounds} concludes the proof.
\end{proof}

\ifarxiv
\section{Conclusion}
We have shown that probability distributions with a high min-entropy cannot be certified from polynomially many samples, even when granting the certifier unlimited computational power and a full description of the target distribution.
Our result applies to the problem of certifying quantum sampling problems as proposed to demonstrate a quantum speedup in a non-interactive device-independent fashion.
We discuss the ironic situation that the very property that crucially contributes to the proof of approximate sampling hardness via Stockmeyer's algorithm and the Paley-Zygmund inequality --- the second moments of the sampled distribution --- forbids sample-efficient classical verification.
Our results highlight the importance of devising more elaborate certification schemes that allow for interaction between certifier and prover, invoke further complexity-theoretic assumptions or such on the sampling device, and/or grant the certifier some small amount of quantum capacities.

\fi

  \cleardoublepage
  \widetext
  \ifjournal
  \section*{Appendices to the Supplementary Material}
  \else
  \appendix
  \section*{Appendix}
  \fi
    In the following, we (re)state some facts and earlier
    results in order to make this work self-contained. 

\section{Proofs for bounding the min-entropy }
\label{app:min entropy}

Here, we provide some details and proofs to statements made in Section~\ref{sec:second moments}.
First, we show the equivalence of the R\'enyi entropies \eqref{eq:entropy equivalence} proceeding analogously to Ref.~\cite{wilming_entanglement-ergodic_2018}:
we simply use that for $\alpha > 1$ and $p_0 = \norm{P}_\infty$ we have $p_0^\alpha \leq \sum_i p_i^\alpha$. Hence,
\begin{align}
\frac{\alpha}{\alpha-1} \log (p_0) & \leq \frac{1}{\alpha-1} \log \sum_{i=0}^{|\mc E_n| -1} p_i^\alpha \\
 \Leftrightarrow
- \frac{\alpha}{\alpha-1} \Hmin(P) & \leq - H_\alpha(P)\\
\Leftrightarrow
 \Hmin(P) & \geq \frac{\alpha-1}{\alpha}H_\alpha(P) \, .
\end{align}
We also provide an alternative proof of Lemma~\ref{lem:anticon-minentropy} based on the proof of Ref.~\cite[Theorem~13]{gogolin_boson-sampling_2013}.
\begin{proof}[Alternative proof of Lemma~\ref{lem:anticon-minentropy}]
	We begin the proof by noting that
	\begin{align}
		\Pr_{U \sim \mu_n} \left[ \Hmin(P_U) \leq \log \frac 1 \delta \right] =\Pr_{U \sim \mu_n}\left[ \exists S\in\mc E_n: P_U(S) \geq \delta \right] .
	\end{align}
  Using the union bound (also known as Boole's inequality) we obtain that for every $\delta > 0$
	\begin{align}
	\label{eq:boolesinequalitybound}
	&\Pr_{U \sim \mu_n}\left[ \exists S\in\mc E_n: P_U(S) \geq \delta \right]
	\leq \sum_{S\in\mc E_n} \Pr_{U \sim \mu_n}\left[ P_U(S) \geq \delta \right] .
	\end{align}
	Next, using Markov's inequality we can bound
	\begin{align}
		\Pr_{U \sim \mu_n}\left[ P_U(S) \geq \delta \right] \leq \frac 1 {\delta^2} \mb E_{U \sim \mu_n }\left[ P_U(S)^2 \right] \, ,
	\end{align}
	which concludes the proof.
\end{proof}

\section{Probability weight of the collision-free subspace}
\label{app:collision free}

We recapitulate a bound of \citet{aaronson_computational_2010} on the probability weight of the collision-free subspace.
\begin{lemma}[{\cite[Theorem~13.4]{aaronson_computational_2010}}]
\label{lem:sizecollisionfree}
	Let $\mu_H$ be the Haar measure on $U(m)$ and $m \geq n $.
	Then
	\begin{align}
		\mb E_{U \sim \mu_H} \left[
			\pbos(\Phi_{m,n} \setminus \Phi_{m,n}^* )
		\right]
		\leq
		\frac {2 n^2} m \, .
	\end{align}
\end{lemma}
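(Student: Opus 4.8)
The plan is to bound the expected weight of the \emph{collision subspace} $\Phi_{m,n}\setminus\Phi_{m,n}^*$ by the expected number of colliding photon pairs, which is a purely two-body quantity and hence easy to control. For an outcome $S=(s_1,\dots,s_m)$ put $C(S)\coloneqq\sum_{j=1}^m\binom{s_j}{2}$. Since $C(S)\ge 1$ whenever $S$ is not collision-free, the indicator of $\Phi_{m,n}\setminus\Phi_{m,n}^*$ is dominated pointwise by $C$, so
\begin{equation}
  \mb E_{U\sim\mu_H}\!\bigl[\pbos(\Phi_{m,n}\setminus\Phi_{m,n}^*)\bigr]
  \le \mb E_{U\sim\mu_H}\mb E_{S\sim\pbos}[C(S)]
  = \sum_{j=1}^m\mb E_{U\sim\mu_H}\mb E_{S\sim\pbos}\!\Bigl[\tbinom{s_j}{2}\Bigr].
\end{equation}
It remains to evaluate the right-hand side, and I would do this in two short steps.

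First, fix $U$ and compute the second factorial moment $\mb E_{S\sim\pbos}[s_j(s_j-1)]$ of the occupation of output mode $j$. Passing to the Heisenberg picture of the Fock-space representation $\varphi(U)$, the output annihilation operators are linear combinations $b_j=\sum_k U_{j,k}\,a_k$ of the input ones (up to a convention-dependent transpose/conjugate that does not affect $|U_{j,k}|^2$), so $s_j(s_j-1)=b_j^{\dagger 2}b_j^2$ in the measured basis. Acting on the input vector $\ket{1_n}$, which carries one photon in each of the first $n$ modes, only distinct pairs of occupied input modes survive:
\begin{equation}
  b_j^2\ket{1_n}=2\!\!\sum_{1\le k<l\le n}\!\!U_{j,k}U_{j,l}\,\ket{\,1_n\setminus\{k,l\}\,},
\end{equation}
where $\ket{1_n\setminus\{k,l\}}$ is the Fock vector obtained from $\ket{1_n}$ by emptying modes $k$ and $l$; these vectors are mutually orthonormal for distinct pairs. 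Hence $\mb E_{S\sim\pbos}[s_j(s_j-1)]=\|b_j^2\ket{1_n}\|^2=4\sum_{k<l\le n}|U_{j,k}|^2|U_{j,l}|^2$, and therefore
\begin{equation}
  \sum_{j=1}^m\mb E_{S\sim\pbos}\!\Bigl[\tbinom{s_j}{2}\Bigr]
  =2\!\!\sum_{1\le k<l\le n}\ \sum_{j=1}^m |U_{j,k}|^2|U_{j,l}|^2.
\end{equation}
The same identity can also be read off directly from the permanent formula~\eqref{eq:bosonsamplingdistribution permanent}, but the second-quantized bookkeeping is cleaner.

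Second, take the Haar average. For $k\ne l$ and any fixed $j$, the $j$-th row of a Haar-random $U\in U(m)$ is a uniformly distributed unit vector in $\mb C^m$, whose squared moduli are $\mathrm{Dirichlet}(1,\dots,1)$-distributed; hence $\mb E_{U\sim\mu_H}[|U_{j,k}|^2|U_{j,l}|^2]=1/(m(m+1))$, and summing over $j$ gives $\mb E_{U\sim\mu_H}\sum_j|U_{j,k}|^2|U_{j,l}|^2=1/(m+1)$. Substituting,
\begin{equation}
  \mb E_{U\sim\mu_H}\!\bigl[\pbos(\Phi_{m,n}\setminus\Phi_{m,n}^*)\bigr]
  \le 2\binom{n}{2}\frac{1}{m+1}=\frac{n(n-1)}{m+1}\le\frac{2n^2}{m},
\end{equation}
which is the claim (in fact this route even yields the stronger bound $n(n-1)/(m+1)$). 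The only step demanding real care is the combinatorial identity of the second paragraph — correctly enumerating the pairs of input modes that contribute and tracking the factor $2$ coming from bosonic exchange; everything else is a one-line symmetry computation.
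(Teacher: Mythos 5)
Your proof is correct, but note that the paper itself does not prove this lemma at all: it is stated in the appendix purely as a restatement of Theorem~13.4 of Aaronson and Arkhipov, cited for self-containedness. What you have produced is a self-contained reconstruction of that ``boson birthday bound,'' and every step checks out. The domination $\mathbb{1}[S\notin\Phi^*_{m,n}]\le C(S)=\sum_j\binom{s_j}{2}$ is valid since $C(S)\ge 1$ exactly on the collision subspace; the normal-ordering identity $\hat n_j(\hat n_j-1)=b_j^{\dagger 2}b_j^2$ and the orthonormality of the two-hole vectors $\ket{1_n\setminus\{k,l\}}$ give $\Eb_{S\sim\pbos}[s_j(s_j-1)]=4\sum_{k<l\le n}|U_{j,k}|^2|U_{j,l}|^2$ (and, as you say, the convention-dependent conjugation of $U$ is irrelevant because only $|U_{j,k}|^2$ survives); and the Haar average via the $\mathrm{Dirichlet}(1,\dots,1)$ distribution of the squared row entries, $\Eb[|U_{j,k}|^2|U_{j,l}|^2]=1/(m(m+1))$ for $k\ne l$, yields $n(n-1)/(m+1)$, which is even slightly sharper than the stated $2n^2/m$. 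This pair-counting route is essentially the same mechanism as in Aaronson and Arkhipov's original proof (a union/first-moment bound over colliding photon pairs combined with the two-mode Haar moment), so you have not so much found a different proof as supplied the one the paper delegates to the reference --- which is a perfectly legitimate and complete way to establish the lemma.
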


\section{The min-entropy bound for boson sampling}
\label{app:proofs bs}

Here, we provide a slightly improved proof of the following min-entropy bound for boson sampling from \cite[Theorem~12]{gogolin_boson-sampling_2013}.
\begin{theorem}[Min-entropy bound for boson sampling  {\cite[Theorem~12]{gogolin_boson-sampling_2013}}]\label{thm:min_entropy_bound_bs}
Let $\nu>3$ and assume that the assertion \eqref{eq:multiplicativeerrorbound1} of Lemma~\ref{lemma:multiplicativeerrorbound} holds.
Then, the boson sampling  output distribution $\pbos$ satisfies for $n$ bosons in $m \in \Theta(n^\nu)$ modes
\begin{equation}
  \Pr_{U \sim \mu_H}\left[ \Hmin(\pbos)< 2 \, n \right]
  \in
	\exp\left(-\Omega( n^{\nu-2-1/n}) \right) .
\end{equation}

\end{theorem}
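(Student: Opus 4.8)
The plan is to control the largest probability of $\pbos$ directly, using that $\Hmin(\pbos)<2n$ holds exactly when some $S\in\Phi_{m,n}$ has $\pbos(S)>2^{-2n}$. By the permanent representation~\eqref{eq:bosonsamplingdistribution permanent} and $\prod_{j}s_j!\ge 1$ it is enough to control $\abs{\Perm(U_S)}$, and for this I would use only the \emph{trivial} bound $\abs{\Perm(X)}\le n!\,(\max_{i,j}\abs{X_{i,j}})^{n}$, valid for every $n\times n$ matrix $X$. Since every $U_S$ is assembled from (possibly repeated) rows of the $m\times n$ matrix $\tilde U$ made of the first $n$ columns of $U$, setting $M_0\coloneqq 1/(2\,(n!)^{1/n})$ the event $\max_{i\in[m],\,j\in[n]}\abs{U_{i,j}}\le M_0$ forces $\abs{\Perm(U_S)}\le n!\,M_0^{n}=2^{-n}$, hence $\pbos(S)\le 2^{-2n}$, simultaneously for \emph{all} $S$; equivalently $\{\Hmin(\pbos)<2n\}\subseteq\{\max_{i,j}\abs{U_{i,j}}>M_0\}$. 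So it only remains to bound the probability of the latter event.

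For the tail estimate the cleanest route uses only that each entry of a Haar-random $U\in U(m)$ obeys $\abs{U_{i,j}}^2\sim\mathrm{Beta}(1,m-1)$, whence $\Pr_{U\sim\mu_H}[\abs{U_{i,j}}>M_0]=(1-M_0^2)^{m-1}\le e^{-(m-1)M_0^2}$ and, by a union bound over the $mn$ entries of $\tilde U$, $\Pr_{U\sim\mu_H}[\max_{i,j}\abs{U_{i,j}}>M_0]\le mn\,e^{-(m-1)M_0^2}$. Alternatively — the route echoed in the main text — one can argue outcome by outcome: for collision-free $S$, Lemma~\ref{lemma:multiplicativeerrorbound} lets one replace expectations of $[0,1]$-valued functions of $U_S$ under $\mu_H$ by $(1+C)$ times those under the complex Gaussian measure $\mu_{G(1/\sqrt m)}$, and applying this to the indicator of $\{\max_{i,j}\abs{X_{i,j}}>M_0\}$ together with the single-entry Gaussian tail $\Pr_{x}[\abs{x}^2>t]\le e^{-\Omega(mt)}$ gives a per-outcome bound $O\big((1+C)\,n^{2}\,e^{-\Omega(mM_0^{2})}\big)$. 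In either case the controlling quantity is $mM_0^{2}=m/(4\,(n!)^{2/n})$; inserting $m=\Theta(n^\nu)$ and Stirling's formula in the sharp form $(n!)^{1/n}=\tfrac{n}{e}\,(2\pi n)^{1/(2n)}(1+o(1))$ yields $mM_0^{2}=\Theta(n^{\nu-2-1/n})$, which is precisely the origin of the $-1/n$ appearing in the claimed exponent.

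It then remains to combine the tail bound with the relevant union bound. In the union-over-entries version the prefactor is only $mn=\Theta(n^{\nu+1})$, which is harmlessly subdominant to $e^{-\Theta(n^{\nu-2-1/n})}$ already for $\nu>2$, giving $\Pr_{U\sim\mu_H}[\Hmin(\pbos)<2n]\in\exp(-\Omega(n^{\nu-2-1/n}))$. The hypothesis $\nu>3$ is needed only in the outcome-by-outcome version, where the per-outcome bound is summed over $\abs{\Phi_{m,n}^*}=\binom{m}{n}\le m^{n}/n!$ collision-free outcomes (the collision outcomes dispatched separately), so that $\log(m^{n}/n!)=\Theta(n\log n)$ must be dominated by $mM_0^{2}=\Theta(n^{\nu-2-1/n})$, i.e.\ $n^{\nu-3}\gg\log n$; since the companion results invoke this bound only for $\nu>3$, nothing is lost. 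Finally, the certification statements are closed via Theorem~\ref{thm:optimal tests} and Lemma~\ref{lem:bounds}, exactly as in the proof of Theorem~\ref{theorem:samplecomplexitybounds boson sampling technical}\ref{item:b}. I expect the only genuinely delicate points to be (i) keeping the Stirling bookkeeping sharp enough to land on $n^{\nu-2-1/n}$ rather than the cruder $n^{\nu-2}$, and (ii) making sure the \emph{trivial} permanent bound is the one used — the tempting row-norm bound $\abs{\Perm(X)}\le\prod_{i=1}^{n}\norm{x_i}_{2}$, with $x_i$ the rows of $X$, is in fact false — so that the estimate applies uniformly to collision-free and collision outcomes alike.
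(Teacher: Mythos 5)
Your proposal is correct, and your primary route is genuinely different from the paper's. The paper proves the theorem by a union bound over all $|\Phi_{m,n}| \le (2(c+1)\e)^n\,n^{(\nu-1)n}$ outcomes: for each $S$ it controls $\Pr_{U\sim\mu_H}[\pbos(S)\ge 2^{-2n}]$ by first transferring from $\mu_H$ to the Gaussian measure $\mu_{G_S(1/\sqrt m)}$ via an extension of Lemma~\ref{lemma:multiplicativeerrorbound} to \emph{all} of $\Phi_{m,n}$ (Lemma~\ref{lemma:multiplicativeerrorbound2}, which is where your phrase ``collision outcomes dispatched separately'' is actually made precise, via the repeated-row measures $\mu_{G_S(\sigma)}$), and then invoking concentration of the maximal Gaussian entry together with the same trivial permanent bound $|\Perm(X)|\le n!\,(\max_{j,k}|x_{j,k}|)^n$ that you use; the hypothesis $\nu>3$ arises exactly as you diagnose, because the exponent $\Theta(n^{\nu-2-1/n})$ must beat the $\Theta(n(\nu-1)\log n)$ cost of that exponentially large union bound. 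Your first route instead union-bounds over the $mn$ entries of the first $n$ columns of $U$, using the exact marginal $|U_{i,j}|^2\sim\mathrm{Beta}(1,m-1)$ of Haar-unitary entries; since the event $\max_{i,j}|U_{i,j}|\le M_0$ with $M_0=1/(2(n!)^{1/n})$ forces $\pbos(S)\le 2^{-2n}$ for \emph{all} $S$ simultaneously, the union-bound prefactor is only the polynomial $mn$, and the same exponent $mM_0^2=\Theta(n^{\nu-2-1/n})$ survives. This is more elementary (no Gaussian-approximation hypothesis is needed, so the assumption on Lemma~\ref{lemma:multiplicativeerrorbound} becomes superfluous), it treats collision and collision-free outcomes uniformly, and it yields the conclusion already for $\nu>2$ --- strictly stronger than the statement as given, and of course implying it for $\nu>3$. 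Your side remarks are also accurate: the Hadamard-type bound $|\Perm(X)|\le\prod_{i}\norm{x_i}_2$ indeed fails (e.g.\ for the all-ones matrix, where $\Perm=n!$ exceeds $n^{n/2}$ for $n\ge 8$), and the $-1/n$ in the exponent originates precisely from the $(2\pi n)^{1/(2n)}$ correction in Stirling's formula, which the paper tracks through the bound $n!\le \e^{1-n}n^{n+1/2}$.
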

The proof crucially uses the closeness of the Gaussian measure to the post-selected Haar measure as expressed by Lemma~\ref{lemma:multiplicativeerrorbound}.
Lemma~\ref{lemma:multiplicativeerrorbound}, however, is not quite strong enough for proving Theorem~\ref{theorem:samplecomplexitybounds boson sampling technical}, as we must be able to control all of $\Phi_{m,n}$ and not only the collision-free subspace $\Phi^*_{m,n}$.
Fortunately, the above lemma extends naturally to all $S \in \Phi_{m,n}$ for the same scaling of $m$ with $n$ for which a version of Lemma~\ref{lemma:multiplicativeerrorbound} holds.

To state this extension we need some notation first:
For every sequence $S$, let $\tilde S$ be the sequence obtained from $S$ by removing all the zeros, i.e,
\begin{equation}\label{eq:tildeSdef}
  \tilde S = (\tilde{s}_1,\dots, \tilde{s}_{|\tilde S|}) \coloneqq (s \in S: s > 0) .
\end{equation}
Further, let $\mu_{G_S(\sigma)}$ be the probability measure on $\mb C^{n \times n}$ obtained by drawing the real and imaginary part of every entry of a $|\tilde S| \times n$ matrix independently from a Gaussian distribution with mean zero and standard deviation $\sigma$ and then for all $j \in [|\tilde S |]$ taking $\tilde{s}_j$ copies of the $j^{\text{th}}$ row of this matrix.
We can prove the following multiplicative error bound on the closeness of this measure and the Haar measure $\mu_H$ for all $S \in \Phi_{m,n}$:
\begin{lemma}[Multiplicative error bound]\label{lemma:multiplicativeerrorbound2}
  Let $f:\mb C^{n\times n} \to \left[0,1\right]$ be measurable, then for any $m,n$ such that
    \begin{equation} \label{eq:multiplicativeerrorbound}
    \forall S \in \Phi^*_{m,n}:\quad \Eb_{U \sim \mu_H} f(U_S)  \leq (1+C)\, \Eb_{X \sim \mu_{G(1/\sqrt{m})}} f(X) ,
  \end{equation}
  is true for some constant $C >0$, it holds that
  \begin{equation} \label{eq:multiplicativeerrorbound2}
    \forall S \in \Phi_{m,n} :\quad \Eb_{U \sim \mu_H}f(U_{S}) \leq (1 + C)\, \Eb_{X \sim \mu_{G_S(1/\sqrt{m})}} f(X).
  \end{equation}
\end{lemma}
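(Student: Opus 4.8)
The plan is to reduce an arbitrary $S\in\Phi_{m,n}$ to a collision-free one by absorbing the row-duplication that defines $U_S$ into the test function, then to invoke the hypothesis~\eqref{eq:multiplicativeerrorbound} (the collision-free bound, i.e.\ Lemma~\ref{lemma:multiplicativeerrorbound}) for that collision-free sequence, and finally to identify the resulting Gaussian push-forward with $\mu_{G_S(1/\sqrt m)}$. Write $k\coloneqq|\tilde S|$ for the number of occupied modes of $S$; since the nonzero occupation numbers are positive integers summing to $n$ we have $k\le n$, and we may assume $n\le m$ (otherwise $\Phi^*_{m,n}$ is empty and the hypothesis carries no content). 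The first step is to fix a collision-free sequence $S^\dagger\in\Phi^*_{m,n}$ whose support contains $\supp(S)$; this is possible precisely because $k\le n\le m$, so $n-k$ further distinct modes of $[m]$ can be adjoined. Then $U_{S^\dagger}$ is just the $n\times n$ submatrix of $U$ obtained from the first $n$ columns and the $n$ rows indexed by $\supp(S^\dagger)$, each taken once.

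Next I would introduce the measurable map $g\coloneqq f\circ D_S\circ R:\mb C^{n\times n}\to[0,1]$, where $R:\mb C^{n\times n}\to\mb C^{k\times n}$ selects those $k$ rows of its argument that correspond, inside $U_{S^\dagger}$, to the modes of $\supp(S)$ (taken in increasing order of mode index, which avoids any issue with how the adjoined modes interleave), and where $D_S:\mb C^{k\times n}\to\mb C^{n\times n}$ replaces the $i$-th of these rows by $\tilde s_i$ copies of itself. By construction $D_S(R(U_{S^\dagger}))=U_S$ for every unitary $U$, hence $g(U_{S^\dagger})=f(U_S)$ pointwise, and applying the collision-free bound~\eqref{eq:multiplicativeerrorbound} to $S^\dagger$ and to $g$ gives $\Eb_{U\sim\mu_H}f(U_S)=\Eb_{U\sim\mu_H}g(U_{S^\dagger})\le(1+C)\,\Eb_{X\sim\mu_{G(1/\sqrt m)}}g(X)$.

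It remains to identify the right-hand side. If $X\sim\mu_{G(1/\sqrt m)}$ on $\mb C^{n\times n}$, then $R(X)$ is a $k\times n$ matrix whose entries still have i.i.d.\ real and imaginary parts of mean zero and standard deviation $1/\sqrt m$, and $D_S\circ R$ then takes $\tilde s_i$ copies of its $i$-th row; this is exactly the recipe defining $\mu_{G_S(1/\sqrt m)}$, so $\Eb_{X\sim\mu_{G(1/\sqrt m)}}g(X)=\Eb_{X\sim\mu_{G_S(1/\sqrt m)}}f(X)$. Combining this with the previous chain yields~\eqref{eq:multiplicativeerrorbound2}, which completes the argument.

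I do not anticipate any analytic difficulty: no estimate beyond the already-assumed collision-free bound enters, and the passage from $\Phi^*_{m,n}$ to $\Phi_{m,n}$ is purely combinatorial. The one point that genuinely needs care is the bookkeeping --- choosing $S^\dagger$ and $R$ so that the $k$ rows of $U_{S^\dagger}$ that $R$ selects, in the order it produces them, reproduce exactly the distinct (pre-duplication) rows of $U_S$, and checking that the very same $R$ turns $\mu_{G(1/\sqrt m)}$ into precisely the $k\times n$ Gaussian ensemble that $D_S$ duplicates into $\mu_{G_S(1/\sqrt m)}$. Once $S^\dagger$ and $R$ are set up consistently, both checks are immediate, and measurability of $g$ follows since $R$ and $D_S$ are coordinate (affine) maps.
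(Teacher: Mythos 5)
Your proof is correct and follows essentially the same route as the paper's: both absorb the row-duplication structure of $U_S$ into the test function (your $f\circ D_S\circ R$ is the paper's $f_S = f\circ\eta\circ\pi$), apply the collision-free hypothesis to that modified function, and then identify the Gaussian push-forward with $\mu_{G_S(1/\sqrt m)}$. The only cosmetic difference is that the paper invokes permutation invariance of the Haar measure to reduce to the specific collision-free sequence $1_n$, whereas you apply the hypothesis directly at a chosen $S^\dagger\supseteq\supp(S)$; both are valid.
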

\begin{proof}
  Let $S \in \Phi_{m,n}$, define $\tilde S$ as in Eq.~\eqref{eq:tildeSdef} and $m'\coloneqq |\tilde S|$.
  Define $v$ to be the sequence containing $\tilde s_j$ times the integer $j$ for every $j \in [m']$ in increasing order and $w$ the sequence containing the positions of each of the first of the repeated rows in $U_S$, i.e.,
  \begin{align}
    v &\coloneqq (\underbrace{1,\ldots,1}_{\tilde s_1},
    \underbrace{2,\ldots,2}_{\tilde s_2},\ldots,\underbrace{m',\ldots, m'}_{\tilde s_{m'}})
    \in (\mb Z^+)^n,
    \\
    w &\coloneqq (1,1+\tilde s_1, 1+\tilde s_1+\tilde s_2, \ldots, 1+\sum_{j=1}^{m'-1} \tilde s_j)
    \in (\mb Z^+)^{m'}.
  \end{align}
  The sequence $v$ defines a linear embedding $\eta : \mb C^{m' \times n} \to  \mb C^{n \times n}$ component wise by
  \begin{align}
    \eta(Y)_{i,j} \coloneqq Y_{v_i, j} \quad \forall i, j \in [n],
  \end{align}
  i.e., $\eta(Y)$ has $s_j$ copies of the $j$-th row of $Y$.
  The sequence $w$, in turn, defines a linear projection $\pi : \mb C^{n \times n} \to \mb C^{m' \times n}$ by
  \begin{align}
    \pi(X)_{i,j} \coloneqq X_{w_i, j} \quad \forall i \in [m'],\ j \in [n],
  \end{align}
  in particular, $\pi(U_S)$ contains only the first out of each series of the repeated rows in $U_S$.
  Note that $\eta \circ \pi : \mb C^{n \times n} \to \mb C^{n \times n}$
  is a projection onto the subspace of matrices that have the same repetition structure as $U_S$.
  Let
  \begin{align}\label{eq:deffS}
    f_S \coloneqq f \circ \eta \circ \pi ,
  \end{align}
  then $f_S(U_S) = f(U_S)$ only depends on the first of the repeated rows in $U_S$ and is independent of all the other rows.
  Since the Haar measure is permutation-invariant we have
  \begin{align}
    \Eb_{U \sim \mu_H} f_S(U_S) &= \Eb_{U \sim \mu_H} f_S(U_{1_n}) .
  \end{align}
  Hence, using Lemma~\ref{lemma:multiplicativeerrorbound} in the second step, we obtain
  \begin{align}
    \Eb_{U \sim \mu_H} f(U_S)
    &= \Eb_{U \sim \mu_H} f_S(U_{1_n})
    \\
    &\leq (1 + C) \Eb_{X \sim \mu_{G(1/\sqrt m)}}f_S(X)
    \\
    &=(1 + C) \Eb_{X \sim \mu_{G_S(1/\sqrt m)}}f(X) ,
  \end{align}
  which finishes the proof.
\end{proof}

In addition to the multiplicative error bound we need the following concentration result for the Gaussian measure $\mu_{G_S(\sigma)}$, which implies that even the largest entry of a matrix drawn from $\mu_{G_S(\sigma)}$ is unlikely to be much larger than $\sigma$.
\begin{lemma}[Concentration of the Gaussian measure $\mu_{G_S(\sigma)}$] \label{lemma:concentrationofthegaussianmeasure}
For all $n,m \in \mb Z^+$, all $S \in \Phi_{m,n}$ and all $\xi > 0$ it holds that
\begin{equation}
  \Pr_{X\sim \mu_{G_S(\sigma)}}\left[ \max_{j,k\in\left[n\right]} |x_{j,k}| \geq \xi \right]
  \leq 1 - \left(1 - \Erfc\left(\frac{\xi}{\sqrt{2}\,\sigma }\right)\right)^{n^2} ,
\end{equation}
where
\begin{equation}
  \Erfc\left(\frac{\xi}{\sqrt{2}\,\sigma}\right) \coloneqq 2 \int_{\xi}^\infty \frac{\e^{-\frac{x^2}{2\,\sigma^2}}}{\sqrt{2\,\pi\,\sigma^2}} \dd x
\end{equation}
is the complementary error function.
\end{lemma}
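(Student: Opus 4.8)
The plan is to prove this by combining a factorization-over-independent-entries argument with an elementary one-dimensional Gaussian tail bound. The key observation is that all of the randomness in $X\sim\mu_{G_S(\sigma)}$ is carried by the distinct rows: by construction $X$ has only $|\tilde S|\le n$ distinct rows (cf.\ the embedding $\eta$ and projection $\pi$ from the proof of Lemma~\ref{lemma:multiplicativeerrorbound2}), hence at most $|\tilde S|\cdot n\le n^2$ distinct entries, and these distinct entries are mutually independent. Since the event $\{\max_{j,k}|x_{j,k}|<\xi\}$ depends only on the distinct entries, it factorizes,
\begin{equation}
\Pr_{X\sim\mu_{G_S(\sigma)}}\left[\max_{j,k}|x_{j,k}|<\xi\right]=q^{\,|\tilde S|\cdot n},\qquad q\coloneqq\Pr[\,|z|<\xi\,],
\end{equation}
where $z$ denotes a single entry of $X$. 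Because $q\in[0,1]$ and $|\tilde S|\cdot n\le n^2$, enlarging the exponent only decreases the right-hand side, so $\Pr[\max_{j,k}|x_{j,k}|<\xi]\ge q^{n^2}$.

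It then remains to control the per-entry failure probability $1-q=\Pr[\,|z|\ge\xi\,]$ by an elementary Gaussian tail estimate, which is precisely where the complementary error function enters: one obtains $1-q\le\Erfc(\xi/(\sqrt2\,\sigma))$. (Keeping track of the real and imaginary parts of $z$ separately, one uses the geometric inclusion $\{|z|\ge\xi\}\subseteq\{|\Re z|\ge\xi/\sqrt2\}\cup\{|\Im z|\ge\xi/\sqrt2\}$ and then a union bound together with the standard tail bound for each real Gaussian component.) Substituting this per-entry bound into the factorized expression gives $\Pr[\max_{j,k}|x_{j,k}|<\xi]\ge(1-\Erfc(\xi/(\sqrt2\,\sigma)))^{n^2}$, which rearranges to the claimed upper bound on $\Pr[\max_{j,k}|x_{j,k}|\ge\xi]$.

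I do not expect a real obstacle here: the lemma is essentially an independence-plus-union-bound statement. The only point that needs to be written carefully is the bookkeeping around repeated rows — spelling out that identical rows contribute no fresh randomness, so that the maximum-modulus event genuinely factorizes over the at most $n^2$ independent entries, and that passing from the exponent $|\tilde S|\cdot n$ to $n^2$ is harmless since the base $1-\Erfc(\xi/(\sqrt2\,\sigma))$ lies in $[0,1]$. The geometric inclusion and the one-dimensional Gaussian tail bound are both routine.
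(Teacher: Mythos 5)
Your overall architecture coincides with the paper's: both proofs reduce the claim to (i) the observation that the event $\{\max_{j,k}|x_{j,k}|<\xi\}$ depends only on the at most $|\tilde S|\cdot n\le n^2$ mutually independent distinct entries, so that its probability equals $q^{|\tilde S|\cdot n}\ge q^{n^2}$ with $q=\Pr[\,|z|<\xi\,]\in[0,1]$ the single\nobreakdash-entry probability (the paper phrases this as ``repetition of entries only increases the chance of not having an exceptionally large entry'', comparing directly to the fully i.i.d.\ measure $\mu_{G(\sigma)}$), and (ii) a single-entry tail estimate. Your step (i) is correct and is essentially the paper's argument.

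The gap is in step (ii). With the paper's normalization, $\Pr[\,|g|\ge t\,]=\Erfc\bigl(t/(\sqrt2\,\sigma)\bigr)$ for a real $g\sim\mc N(0,\sigma^2)$, so your inclusion $\{|z|\ge\xi\}\subseteq\{|\mathrm{Re}(z)|\ge\xi/\sqrt2\}\cup\{|\mathrm{Im}(z)|\ge\xi/\sqrt2\}$ plus the union bound yields $1-q\le 2\,\Erfc\bigl(\xi/(2\sigma)\bigr)$. Since $\Erfc$ is decreasing and $\xi/(2\sigma)<\xi/(\sqrt2\,\sigma)$, this is \emph{strictly larger} than the required $\Erfc\bigl(\xi/(\sqrt2\,\sigma)\bigr)$, so the step does not close. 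Moreover, no repair is possible for the complex modulus: with $\mathrm{Re}(z),\mathrm{Im}(z)\sim\mc N(0,\sigma^2)$ independent, $|z|^2$ is exponential and $\Pr[\,|z|\ge\xi\,]=\e^{-\xi^2/(2\sigma^2)}$ exactly, which by the very estimate $\Erfc(x)\le\e^{-x^2}$ invoked right after this lemma is \emph{at least} $\Erfc\bigl(\xi/(\sqrt2\,\sigma)\bigr)$ --- the inequality you need points the wrong way. The paper's own proof glosses over this by asserting $\Pr[\,|x_{j,k}|\ge\xi\,]=\Erfc\bigl(\xi/(\sqrt2\,\sigma)\bigr)$, which is correct for a single real $\mc N(0,\sigma^2)$ entry but not for the modulus of a complex one. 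The discrepancy is harmless downstream, because the proof of Theorem~\ref{thm:min_entropy_bound_bs} only ever uses $\Erfc(x)\le\e^{-x^2}$ and the exact Rayleigh tail saturates that bound; but as a proof of the lemma as literally stated, your step (ii) fails and should be replaced by the exact one-dimensional tail computation (real Gaussian or Rayleigh, depending on how $|x_{j,k}|$ is read).
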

\begin{proof}
  For Gaussian random variables we have
  \begin{equation}
    \forall \xi>0,\ j,k\in\left[n\right]:\quad \Pr_{X\sim \mu_{G(\sigma)}} \left[|x_{j,k}| \geq \xi \right] = \Erfc\left(\frac{\xi}{\sqrt{2}\,\sigma}\right) .
  \end{equation}
This implies that
\begin{equation}
\forall \xi>0:\quad \Pr_{X\sim \mu_{G(\sigma)}}\left[ \forall j,k\in\left[n\right]: |x_{j,k}| \leq \xi \right] = \left(1- \Erfc\left(\frac{\xi}{\sqrt{2}\,\sigma}\right)\right)^{n^2} .
\end{equation}
At the same time, for all $S \in \Phi_{m,n}$ and $ \xi>0$ it holds that
\begin{equation}
  \Pr_{X\sim \mu_{G_S(\sigma)}} \left[ \forall j,k\in\left[n\right]: |x_{j,k}|\leq  \xi \right] \geq
  \Pr_{X\sim \mu_{G(\sigma)}} \left[ \forall j,k\in\left[n\right]: |x_{j,k}| \leq \xi \right] ,
\end{equation}
because the repetition of entries in $X \sim \mu_{G_S(\sigma)}$ only increases the chance of not having an exceptionally large entry.
\end{proof}

As a last ingredient we need to bound the size
\begin{equation}
  |\Phi_{m,n}| = \binom{m+n-1}{n}
\end{equation}
of the sample space $\Phi_{m,n}$ of boson sampling  (recall Eq.~\eqref{eq:bosonsamplingsamplespace}).
It grows faster than than exponentially with $n$, but if for some $\nu\geq1$ and $c\geq0$ it holds that $m \leq c\,n^{\nu}$, then
\begin{align} \label{eq:samplespaceishuge}
  |\Phi_{m,n}| & \leq \frac{(m+n-1)^n}{n!} \leq \left(\frac{(m+n-1)\,\e}{n}\right)^n \\
  &
  \leq \e^n\,(c\,n^{\nu-1} + 1 - 1/n )^n \leq  (2\,(c+1)\,\e)^n\,n^{(\nu-1)\,n} . \label{eq:Phibound}
\end{align}

We now have all the ingredients rederive the desired min-entropy bound in Theorem~\ref{thm:min_entropy_bound_bs}.

\begin{proof}[Proof of Theorem~\ref{thm:min_entropy_bound_bs}]
  Using the union bound (also known as Boole's inequality) in the first step we obtain that for every $\epsilon > 0$
  \begin{align}
    &\Pr_{U \sim \mu_H}\left[ \exists S\in\Phi_{m,n}: \pbos(S) \geq \epsilon \right] \label{eq:boolesinequalitybound3}\\
    &\leq \sum_{S\in\Phi_{m,n}} \Pr_{U \sim \mu_H}\left[ \pbos(S) \geq \epsilon \right] \label{eq:boolesinequalitybound2}\\
    &\leq |\Phi_{m,n}| \max_{S\in\Phi_{m,n}} \Pr_{U \sim \mu_H}\left[ \pbos(S) \geq \epsilon \right] \label{eq:maxboundonsum} \\
    &= |\Phi_{m,n}| \max_{S\in\Phi_{m,n}} \Pr_{U \sim \mu_H}\left[ \frac{|\Perm(U_{S})|^2}{\prod_{j=1}^m (s_j!)} \geq \epsilon \right] .\label{eq:maximumprobabilitylargerepsilon}
  \end{align}
  We now apply Lemma \ref{lemma:multiplicativeerrorbound2} to the indicator function
  \begin{equation}
    f(U_S) =
    \begin{cases}
      1 & \text{if } \frac{|\Perm(U_{S})|^2}{\prod_{j=1}^m (s_j!)} \geq \epsilon \\
      0 & \text{otherwise}
    \end{cases},
  \end{equation}
  and the $S$ for which the maximum in Eq.~\eqref{eq:maximumprobabilitylargerepsilon} is attained, to obtain
  \begin{equation}
    \begin{split}
      &\Pr_{U \sim \mu_H}\left[ \exists S\in\Phi_{m,n}: \pbos(S) \geq \epsilon \right]
      \leq (1+C)\, |\Phi_{m,n}| \max_{S\in\Phi_{m,n}} \Pr_{X \sim \mu_{G_S(1/\sqrt{m})}} \left[ \frac{|\Perm(X)|^2}{\prod_{j=1}^m (s_j!)} \geq \epsilon \right] . \label{eq:lastinequalitybeforecudeboundonpermx}
    \end{split}
  \end{equation}
  The definition of the permanent (recall Eq.~\eqref{eq:definitionpermanent}) implies that
  \begin{equation} \label{eq:stupidpermamentbound}
    \frac{|\Perm(X)|^2}{\prod_{j=1}^m (s_j!)} \leq |\Perm(X)|^2 \leq (n!)^2\,\left(\max_{j,k \in \left[n\right]} |x_{j,k}|\right)^{2n} .
  \end{equation}
  Hence, for every $S\in\Phi_{m,n}$ and every $\epsilon > 0$
  \begin{equation}
    \Pr_{X \sim \mu_{G_S(1/\sqrt{m})}} \left[ \frac{|\Perm(X)|^2}{\prod_{j=1}^m (s_j!)} \geq \epsilon \right]
    \leq \Pr_{X \sim \mu_{G_S(1/\sqrt{m})}} \left[ \max_{j,k \in \left[n\right]} |x_{j,k}| \geq \left(\frac{\sqrt{\epsilon}}{n!}\right)^{1/n}   \right] .
  \end{equation}
  Plugging this into Eq.~\eqref{eq:lastinequalitybeforecudeboundonpermx}, using Lemma~\ref{lemma:concentrationofthegaussianmeasure} with $\xi = \left(\sqrt{\epsilon}/n!\right)^{1/n}$ and the bound on $|\Phi_{m,n}|$ from Eq.~\eqref{eq:Phibound} we arrive at
  \begin{equation}
    \begin{split}
      &\Pr_{U \sim \mu_H}\left[ \exists S\in\Phi_{m,n}: \pbos(S) \geq \epsilon \right] \\
      &\leq (1 + C)\,(2\,(c+1)\,\e)^n\,n^{(\nu-1)n} \left( 1 - \left(1 - \Erfc\sqrt{\frac{c}{2} \frac{\epsilon^{1/n}\,n^{\nu} }{(n!)^{2/n} }}\right)^{n^2} \right). \label{eq:boundonlargeprobabilitystillwitherfc}
    \end{split}
  \end{equation}
  Bounding the complementary error function by \cite{ermolova_simplified_2004}
    \begin{equation}
      \Erfc\left(x\right) \leq \e^{-x^2} ,
  \end{equation}
  we obtain
  \begin{align}
    1 - \left(1 - \Erfc(x) \right)^{n^2}
    &\leq 1 - \left(1 - \e^{-x^2} \right)^{n^2}
    = 1 - \sum_{k=0}^{n^2} \binom{n^2}{k}\,(-\e^{-x^2})^k \\
    &= \sum_{k=1}^{n^2} \binom{n^2}{k}\,\e^{-x^2 k}\,(-1)^{k-1}
    \leq \sum_{k=1}^{n^2} (n^2 \e/k)^k\,\e^{-x^2 k}  \\
    &\leq\sum_{k=1}^{n^2} (n^2\,\e^{-x^2 + 1} )^k . \label{eq:geometricseries}
  \end{align}
  If $x$ is large enough such that
  \begin{equation} \label{eq:convergencecondition}
    n^2\,\e^{-x^2+1} \leq \frac{1}{2} < 1,
  \end{equation}
  the geometric series in Eq.~\eqref{eq:geometricseries} converges and we get the simple bound
  \begin{equation} \label{eq:boundbygeometricseries}
    1 - \left(1 - \Erfc(x) \right)^{n^2} \leq \sum_{k=1}^{n^2} (n^2 \e^{-x^2+1} )^k \leq \frac{n^2 \e^{-x^2+1}}{1 - n^2 \e^{-x^2+1} }
    \leq 2\,n^2\,\e^{-x^2+1} .
  \end{equation}
  To satisfy Eq.~\eqref{eq:convergencecondition} for large $n$, it is sufficient that $x$ grows slightly faster than $\sqrt{\log(n^2)}$ and we hence need to demand a growth slightly faster than $\log(n^2)$ from the argument of the square root in the error function in Eq.~\eqref{eq:boundonlargeprobabilitystillwitherfc}.
  Because of the bound $n! \leq \e^{1-n}\,n^{n+1/2}$ (a variant of Stirling's approximation) we have for the argument of that square root in Eq.~\eqref{eq:boundonlargeprobabilitystillwitherfc}
  \begin{align}\label{eq:boundfortheroot}
    \frac{c}{2} \frac{\epsilon^{1/n} n^{\nu} }{(n!)^{2/n} } \geq \frac{c}{2} \frac{\epsilon^{1/n} n^{\nu} }{\e^{2/n-2} n^{2+1/n} } = \frac{c}{2} \frac{\epsilon^{1/n}}{\e^{2/n-2}} n^{\nu-2-1/n} ,
  \end{align}
  Demanding $\nu > 2$ is hence all we need to be able to use the bound \eqref{eq:boundbygeometricseries} for large $n$.
  With the convenient choice $\epsilon = 2^{-2n}$ it hence follows that for all $\nu>2$
  \begin{equation}
  \label{eq:finalprobabilitybound}
    \begin{split}
      &\Pr_{U \sim \mu_H}\left[ \exists S\in\Phi_{m,n}: \pbos(S) \geq 2^{-2n} \right] \\
      &\in O\left(n^2\,(2\,(c+1)\,\e)^n\,n^{(\nu-1)n} \exp(-c\,\e^{-2/n+2}\,n^{\nu-2-1/n} /8) \right) .
    \end{split}
  \end{equation}
  The argument of the $O(\cdot)$ is dominated by the product $n^{(\nu-1)n} \exp(-c\,\e^{-2/n+ 2}\,n^{\nu-2-1/n} /8)$, which decays for large increasing $n$ only for $\nu>3$.
  More precisely, there are constants $n_0 \in \mb N$ and $C_1,C_2,C_3 > 0 $ such that for $n \ge n_0$
  \begin{align}
     n^2\,& (2\,(c+1)\,\e)^n\,n^{(\nu-1)n} \exp(-c\,\e^{-2/n+2}\,n^{\nu-2-1/n} /8)  \\
    &  = \exp\left( 2\ln(n) + n \ln(2\,(c+1)\,\e) + n(\nu - 1) \ln n -c\,\e^{-2/n+2}\,n^{\nu-2-1/n} /8 \right)\\
    & \leq   \exp\left( C_1 n(\nu - 1) \ln n -c\,\e^{-2/n+2}\,n^{\nu-2-1/n} /8) \right)\\
    & \leq   \exp \left(C_1 n ( \nu - 1) \ln n - C_2  n^{\nu - 2 - 1/n} \right)\\
    & 
    \stackrel{\nu > 3}{\leq}  \exp\left(- C_3 n^{\nu - 2 - 1/n} \right )  \in \exp\left( - \Omega(n^{\nu - 2 - 1/n})\right).
  \end{align}
  where the last inequality holds only for $\nu > 3$ since the logarithm grows slower than any power law with positive exponent. 
  This completes the proof.

\end{proof}

\ifjournal
\putbib
\end{bibunit}
\else
  \twocolumngrid
  \bibliographystyle{./myapsrev4-1}
  \bibliography{sample_complexity_supremacy}
\fi

\end{document}